\def\ket#1{| #1 \rangle}
\def\bra#1{\langle #1 |}
\def\kb#1#2{|#1\rangle\!\langle #2 |}
\def\bk#1#2{\langle #1 |#2\rangle}
\def\be{\begin{eqnarray}}
\def\ee{\end{eqnarray}}
\def\bee{\begin{eqnarray*}}
\def\eee{\end{eqnarray*}}
\newtheorem{defn}{Definition}
\newtheorem{prop}{Proposition}
\newtheorem{thm}{Theorem}
\newtheorem{exa}{Example}
\newtheorem{lem}{Lemma}
\newtheorem{con}{Conjecture}
\newtheorem{rmk}{Remark}
\newtheorem{cor}{Corollary}
\newcommand{\C}{{\mathbb C}}
\newcommand{\M}{{\mathbb M}}
\renewcommand{\H}{{\mathcal H}}
\renewcommand{\S}{{\mathcal S}}
\newcommand{\operp}{$\bigcirc$\kern-.91em{$\perp$}}
\newcommand{\tr}{\operatorname{Tr}}
\def\be{\begin{eqnarray}}
\def\ee{\end{eqnarray}}
\def\bee{\begin{eqnarray*}}
\def\eee{\end{eqnarray*}}
\def\ot{\otimes}
\begin{document}

\title[Quantum Error Correction and One-Way LOCC]{Quantum Error Correction and One-Way LOCC State Distinguishability}
\author[D.W.Kribs, C.Mintah, M.Nathanson, R.Pereira]{David W. Kribs$^{1,2}$, Comfort Mintah$^{1}$, Michael Nathanson$^3$, Rajesh Pereira$^{1}$}

\address{$^1$Department of Mathematics \& Statistics, University of Guelph, Guelph, ON, Canada N1G 2W1}
\address{$^2$Institute for Quantum Computing and Department of Physics \& Astronomy, University of Waterloo, Waterloo, ON, Canada N2L 3G1}
\address{$^3$Department of Mathematics and Computer Science, Saint Mary's College of California, Moraga, CA, USA 94556}

\begin{abstract}
We explore the intersection of studies in quantum error correction and quantum local operations and classical communication (LOCC). We consider one-way LOCC measurement protocols as quantum channels and investigate their error correction properties, emphasizing an operator theory approach to the subject, and we obtain new applications to one-way LOCC state distinguishability as well as new derivations of some established results. We also derive conditions on when states that arise through the stabilizer formalism for quantum error correction are distinguishable under one-way LOCC.
\end{abstract}

\subjclass[2010]{47L90, 47L05, 81P15, 81P45, 81R15}

\keywords{distinguishable quantum states, local operations and classical communication, operator system, operator algebra, separating vector, quantum error correction, stabilizer formalism, quantum channel, completely positive map.}

\date{July, 2018}

\maketitle

\section{Introduction}

Quantum error correction as a field of study began with efforts in quantum computing and communication during the early 1990's \cite{sho95a,ste96a,bdsw96,gottesmanstab,knilllaflamme}. Initially the motivation was to understand how quantum information could be encoded into physical systems in ways that would allow for the preservation or controlled recovery of quantum bits, and thus overcome unwanted noise or errors brought on by physical intrusions such as decoherence. Over the subsequent two and a half decades, the subject of quantum error correction has blossomed and expanded to the extent that it now touches on almost every area of quantum information science. Here we note the structure theory that has been developed over the last several years~ \cite{klp05,klpl06,bkk07,bkk07b}, culminating in the theory of ``operator algebra quantum error correction''.

Elsewhere in quantum information, there is longstanding interest in  the problem of identifying states from a composite quantum system using only  operations local to each subsystem. Examples such as quantum teleportation and data hiding \cite{Teleportation, terhal2001hiding,eggeling2002hiding} can be seen through this lens, and the problem as a whole can be seen as a means to understand the interplay between locality and entanglement. This is a subset of the more general problem of determining which information tasks can be accomplished using only LOCC operations, indicating the restriction to Local Operations but allowing unrestricted Classical Communication between the subsystems. There is extensive literature on the problem of local operations, including the problem of state discrimination \cite{bennett1999quantum,ghosh2004distinguishability,horodecki2003local,chefles2004condition,cosentino2013small}. The set of permissible LOCC operations is notoriously messy, leading us to examine the more restrictive class of one-way LOCC operations, in which operations must be performed sequentially on the subsystems in a prescribed order. This restricted problem has been the subject of investigations along a number of different lines, and many fundamental LOCC results can  be achieved using only one-way LOCC \cite{Walgate-2000,Nathanson-2005,fan2004distinguishability,N13}.

Although there are clearly connections that have been made between studies in quantum error correction and LOCC, with initial explorations going back to the beginnings of the subjects~\cite{bdsw96}, a comprehensive investigation has yet to be undertaken that involves the considerable advances in both subjects over the past fifteen years.  In this paper, we initiate the first such investigation. We bring together some key aspects of the two subjects, with an emphasis on more recently developed approaches and results, and look for common themes and new results. As the theory of quantum error correction has been more extensively developed over time, our focus here is to look for applications of that theory to LOCC, and in particular to the fundamental problem of quantum state distinguishability in one-way LOCC. As in our recent work \cite{kribs2016operator}, our analysis focuses on underlying mathematical structures such as special types of operator systems and operator algebras that can now be seen as relevant to both fields.

This paper is organized as follows. In Section~2 we cast one-way LOCC measurement protocols as quantum channels and then start an analysis of these channels from a standard (subspace) error correction perspective, obtaining LOCC applications and a new view on quantum teleportation. In Section~3 we expand this investigation to the more general setting of correctable operator algebras and obtain further applications to one-way LOCC state distinguishability, specifically on the distinguishability of certain physically relevant projections. Independent of these results, in Section~4 we build on the operator approach of \cite{kribs2016operator}  to find sets of quantum states that are generated via the fundamental stabilizer formalism for quantum error correction and determine precisely when they are distinguishable under one-way LOCC.

We use standard quantum information notation and nomenclature throughout the paper, and introduce basic notions as they arise in the presentation.

\section{One-Way LOCC as a Quantum Channel}

In this section, we show how one-way LOCC protocols may be viewed in the quantum channel formalism. We then investigate how quantum error correction conditions relate to the channels and obtain applications back to one-way LOCC state distinguishability.

Suppose that two parties Alice and Bob share an unknown (entangled) state  from a known set of states, $\ket{\phi} \in \mathcal S = \{ \ket{\phi_i} \}$, on their combined quantum system $\mathcal H = \mathcal H_A \otimes \mathcal H_B$, with $a = \dim \mathcal H_A \le \dim \mathcal H_B < \infty$. We identify the elements of $\mathcal S$ with operators in the standard way, writing  $\ket{\phi_i} = (I \otimes B_i)\ket{\Phi}$, where $I$ is the identity operator on $\mathcal H_A$, the $B_i$ are operators mapping  $\mathcal H_A$ to  $\mathcal H_B$, and $\ket{\Phi} = \frac{1}{\sqrt{a}}(\ket{00}+\ldots + \ket{a-1 \, a-1}$ is the canonical maximally-entangled state on ${\mathcal H}_A \otimes {\mathcal H}_A$.

For instance, a case of central interest has $\mathcal H_A$ and $\mathcal H_B$ the same dimension and with maximally entangled states $\ket{\phi_i} = (I\otimes U_i) \ket{\Phi}$ with $U_i$ unitary operators on $\mathcal H_A$.

The task is to, if possible, determine the identity of $\ket{\phi}$ using only one-way LOCC measurement operations. That is, Alice will perform a local measurement on her system and report the results to Bob, who will then perform a measurement on his system. In this setting, Alice and Bob can always gain maximal information by performing local complete measurements on their system, so we will assume this without loss of generality. (See, e.g., \cite{kribs2016operator,N13}.) The first step of an optimal one-way LOCC protocol can be viewed as a quantum-classical channel acting on Alice's system:
\bee
\Phi_{QC}(\rho) = \sum_{ j  = 1}^r \kb{j}{j} \tr (\sigma_j \rho),
\eee
where $\sum_{j=1}^r \sigma_j = I_a$. Since $\{ \sigma_j \}$ forms a complete measurement on $\mathcal H_A$, each operator is rank one and so we may write
\bee
\sigma_j  = m_j \kb{\varphi_j}{\varphi_j}.
\eee
Then
\bee
\Phi_{QC}(\rho) = \sum_{ j  = 1}^r  V_{j} \rho V_{j}^*,
\eee
where $V_{j} = \sqrt{m_j} \kb{j}{\varphi_j}$. The effect of Alice's measurement on the whole system is the application of the channel $({\Phi_{QC}})_A \otimes \mathrm{id}_B$:
\bee
\Phi_{QC} \otimes \mathrm{id}_B : {\mathcal L}(\H_A \otimes \H_B) \rightarrow {\mathcal L}(\H_C \otimes \H_B),
\eee
where $\H_C$ is an $r$-dimensional  classical ancilla system $\H_C$ to emphasize the change. The Kraus operators of $\Phi_{QC} \otimes \mathrm{id}_B$ are given by $\{V_j \otimes I_B \}$.

Alice and Bob's task can be seen as trying to identify $\ket{\phi}\in \mathcal S$ despite the noise introduced by this channel. This suggests using tools from quantum error correction  to find code subspaces on which this channel's noise can be overcome. Applying the Knill-Laflamme conditions \cite{knilllaflamme,nielsen} directly to the channel $\Phi_{QC} \otimes \mathrm{id}_B$, a correctable subspace $\mathcal{C}$ will be given by a projection $P_\mathcal{C}$ on $\mathcal H_A \otimes \mathcal H_B$ such that there are complex scalars  $\lambda_{i,j}$ with
\bee
\lambda_{i,j}  P_\mathcal{C}=  P_\mathcal{C} (V_i^*V_j \otimes I_B) P_\mathcal{C}.
\eee

This leads us to a first statement relating quantum error correcting code spaces and one-way LOCC with maximally-entangled states.

\begin{thm}\label{thm: MaxEnt LOCC implies correctability}
Let $\mathcal S = \{ \ket{\phi_i}\}$ be a set of maximally-entangled orthogonal states in $\H_A \otimes \H_B$ that can be perfectly distinguished with one-way LOCC, with $|S| = d$ and $\dim \H_A \le \dim \H_B$.

If we define $\Phi_{QC}$ to be the quantum-classical channel corresponding to Alice's optimal measurement, then there exists a $d$-dimensional correctable code $\mathcal C$ in  $\H_A \otimes \H_B$ for the noise model ${\mathcal E} = \Phi_{QC} \otimes \mathrm{id}_B$, with the code space $\mathcal C = \mathrm{span}\,\{ \ket{\phi_i} \}$ generated by the elements of $\mathcal S$.
\end{thm}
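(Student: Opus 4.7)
The plan is to verify the Knill-Laflamme error correction conditions
\[
\lambda_{i,j}\,P_{\mathcal C} = P_{\mathcal C}(V_i^* V_j \otimes I_B) P_{\mathcal C}
\]
directly, using the explicit Kraus form $V_j = \sqrt{m_j}\ket{j}\bra{\varphi_j}$ and the projection $P_{\mathcal C} = \sum_k \ket{\phi_k}\bra{\phi_k}$ onto the span of the (orthonormal) states $\ket{\phi_k}$. A short calculation gives
\[
V_i^* V_j = \sqrt{m_i m_j}\,\ket{\varphi_i}\bk{i}{j}\bra{\varphi_j} = \delta_{ij}\,m_i\ket{\varphi_i}\bra{\varphi_i} = \delta_{ij}\,\sigma_i,
\]
so the off-diagonal Knill-Laflamme conditions ($i\neq j$) hold trivially with $\lambda_{ij}=0$, and the problem reduces to showing $P_{\mathcal C}(\sigma_i \otimes I_B) P_{\mathcal C}$ is a scalar multiple of $P_{\mathcal C}$ for each $i$.

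Since the $\ket{\phi_k}$ form an orthonormal basis for $\mathcal C$, I would establish this by verifying that the matrix $M^{(i)}_{kl} := \bra{\phi_k}(\sigma_i\otimes I_B)\ket{\phi_l}$ has the form $c_i\,\delta_{kl}$. For the diagonal entries,
\[
M^{(i)}_{kk} = \tr\!\bigl(\sigma_i \cdot \tr_B\ket{\phi_k}\bra{\phi_k}\bigr) = \tr(\sigma_i \cdot I_a/a) = m_i/a,
\]
where maximal entanglement of $\ket{\phi_k}$ is precisely what forces the reduced state on $\H_A$ to equal $I_a/a$ and thus makes $M^{(i)}_{kk}$ independent of $k$. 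For the off-diagonal entries, I would compute $(\bra{\varphi_i}\otimes I_B)\ket{\phi_k} = \frac{1}{\sqrt a}\,B_k\ket{\overline{\varphi_i}}$ and so
\[
M^{(i)}_{kl} = \tfrac{m_i}{a}\,\bra{\overline{\varphi_i}}B_k^* B_l\ket{\overline{\varphi_i}},
\]
which is $m_i/a$ times the inner product of Bob's unnormalized post-measurement states conditional on Alice's outcome $i$. Since Alice obtains outcome $i$ with the same positive probability $m_i/a$ regardless of which $\ket{\phi_k}$ was input, perfect one-way LOCC distinguishability forces Bob's conditional states to be pairwise orthogonal, giving $M^{(i)}_{kl}=0$ for $k\neq l$. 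Combining yields $P_{\mathcal C}(\sigma_i\otimes I_B)P_{\mathcal C} = (m_i/a)\,P_{\mathcal C}$, verifying the Knill-Laflamme condition with $\lambda_{ii}=m_i/a$.

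The main obstacle is conceptual rather than computational: one must recognize how the two Knill-Laflamme requirements---constant diagonal and vanishing off-diagonal entries of each $M^{(i)}$---correspond precisely to the two hypotheses of the theorem. Maximal entanglement makes each $\ket{\phi_k}$ look identical to Alice and so produces uniform measurement statistics across $\mathcal S$, while one-way LOCC distinguishability forces Bob's conditional post-measurement states to discriminate, which at the level of inner products means pairwise orthogonality. Neither hypothesis by itself would suffice to produce the scalar multiple.
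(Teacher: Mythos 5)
Your proposal is correct and follows essentially the same route as the paper's proof: reduce to the single-index Knill--Laflamme conditions via $V_i^*V_j=\delta_{ij}\sigma_i$, kill the off-diagonal entries of $P_{\mathcal C}(\sigma_i\otimes I_B)P_{\mathcal C}$ using the orthogonality of Bob's post-measurement states forced by perfect one-way LOCC discrimination, and make the diagonal entries constant ($1/a$, up to the $m_i$ normalization) using maximal entanglement. The only cosmetic difference is that you organize the computation as a matrix $M^{(i)}_{kl}$ in the basis $\{\ket{\phi_k}\}$ rather than expanding the double sum over $\kb{\phi_i}{\phi_i}$ and $\kb{\phi_k}{\phi_k}$ as the paper does.
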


\begin{proof}
We first define the projection,
\bee P_{\mathcal C} := \sum_{i = 1}^d \kb{\phi_i}{\phi_i}. \eee
With $V_j = \sqrt{m_j} \kb{j}{\varphi_j}$ as above, we have
\[
V_i^*V_j = \sqrt{m_im_j} \ket{\varphi_i}\bk{i}{j}\bra{\varphi_j} = \delta_{i,j} \kb{\varphi_j}{\varphi_j}
\]
and so we need only check that for each $j$ there are scalars $\lambda_j$ such that, \bee
 P_\mathcal{C} (\kb{\varphi_j}{\varphi_j} \otimes I_B) P_\mathcal{C} = \lambda_j P_\mathcal{C}. \eee
Since the measurement perfectly distinguishes the elements of ${\mathcal S}$, for each $j$, the (unnormalized)  states  $\{(\kb{\varphi_j}{\varphi_j}_A \otimes I_B) \ket{\phi_i}\}_i$ are mutually orthogonal. Hence,
\bee
P_{\mathcal C} (\kb{\varphi_j}{\varphi_j}_A
\otimes I_B) P_{\mathcal C}  &=& \sum_{i,k= 1}^d \kb{\phi_i}{\phi_i} (\kb{\varphi_j}{\varphi_j}_A
\otimes I_B) \kb{\phi_k}{\phi_k} \\
& = &  \sum_{i= 1}^d \kb{\phi_i}{\phi_i} (\kb{\varphi_j}{\varphi_j}_A
\otimes I_B) \kb{\phi_i}{\phi_i} .
\eee
The fact that the states in ${\mathcal S}$ are maximally-entangled, and also with $a = \dim\H_A \le \dim\H_B$, implies that each $\ket{\phi_i} = (I\otimes U_i)\ket{\Phi}$ with $U_iU_i^* = I_A$; hence $\bra{\phi_i} (\kb{\varphi_j}{\varphi_j}_A
\otimes I_B) \ket{\phi_i} = \frac{1}{a}$ for all $i$. Thus, for all $j$,
\bee
P_{\mathcal C} (\kb{\varphi_j}{\varphi_j}_A
\otimes I_B) P_{\mathcal C}  = \frac{1}{a} P_{\mathcal C},
\eee
and it follows that $\mathcal C$ is correctable for $\mathcal E = \Phi_{QC} \otimes \mathrm{id}_B$.
\end{proof}

For illustrative purposes let us consider a very simple case of this result.

\begin{exa}
Let $\mathcal H_A = \C^2 = \mathcal H_B$ and let $\mathcal S = \{ \ket{\phi_1}, \ket{\phi_2}\}$ be the Bell basis states $\ket{\phi_1} = \frac{1}{\sqrt{2}}(\ket{00} + \ket{11})$, $\ket{\phi_2} = \frac{1}{\sqrt{2}}(\ket{01} + \ket{10})$. These are maximally entangled states, observe $\ket{\phi_1}=\ket{\Phi}$ and $\ket{\phi_1} = (I\otimes X)\ket{\Phi}$ where $X$ is the Pauli bit flip operator, and they are perfectly distinguished by the one-way LOCC measurement
$\M = \{ A_k \otimes B_{k,j} \}_{j,k =1}^2$ with $A_1 = \kb00$, $A_2 = \kb11$, and $B_{i,j} = A_{2-\delta_{ij}}$ (and where $\delta_{ij}$ is the Kronecker delta).

Meshing with the proof above, here we have $a=2=r$, 
$\ket{\varphi_1}= \ket{0}$, 
$\ket{\varphi_2}= \ket{1}$, and then $V_1 = \kb00$, $V_2 = \kb11$. The relevant two-qubit channel $\Phi_{QC}\otimes \mathrm{id}_2$ in this case implements the von Neumann measurement with Kraus operators $\{ {V}^\prime_j := V_j \otimes I_2 \}_{j=1,2}$. The code space here is the single qubit subspace $\mathcal C = \mathrm{span}\,\{ \ket{\phi_1},\ket{\phi_2}\}$, with $P_{\mathcal C} = \kb{\phi_1}{\phi_1} + \kb{\phi_2}{\phi_2}$, which is correctable for $\mathcal E = \Phi_{QC}\otimes \mathrm{id}_2$ with conditions: $P_{\mathcal C} (V_i^* V_j \otimes I_2) P_{\mathcal C} = \frac{\delta_{ij}}{2} P_{\mathcal C}$.

Observe the operators ${V}^\prime_j$ act as scalar multiples of unitaries with mutually orthogonal ranges when restricted to $\mathcal C$; in particular, one can easily verify ${V}^\prime_1 \mathcal C = \mathrm{span}\,\{\ket{00},\ket{01}\}$ and ${V}^\prime_2 \mathcal C = \mathrm{span}\,\{\ket{10},\ket{11}\}$.  The correction operation in this case can be seen from the standard error correction recovery construction to be given by the channel $\mathcal R(\rho) = \sum_{k=1}^2 R_k \rho R_k^*$, with $R_k = W_k^* P_k$ and $P_1=\kb00 \otimes I_2$, $P_2=\kb11 \otimes I_2$, $W_k = \sqrt{2} \overline{V}_k = \sqrt{2} P_k$.
\end{exa}

Let us extrapolate from the end of this example to comment more generally on how the recovery operation works in the error correction protocols associated with this result. Suppose that Alice and Bob can distinguish the basis $\mathcal S = \{\ket{\phi_i} = (I \otimes B_i)\ket{\Phi}\} $ as described in the theorem. As these states are maximally-entangled, the corresponding operators $B_i$ are partial isometries from $\H_A$ into $\H_B$.  Suppose that Alice performs the measurement $\{  \sqrt{m_j} \kb{j}{ \overline{\varphi_j}} \}$,where $\ket{ \overline{\varphi_j}} $ is the entrywise complex conjugate of $\ket{ {\varphi_j}}$ in the standard basis. If Alice  gets the outcome $j=x$, then the state of Bob's system is $B_i \ket{\varphi_x}$ with probability $\frac{1}{a}$, implying that the states $\{ B_i \ket{\varphi_x}\}_{i = 1}^d$ are mutually orthogonal. This allows us to define a correction operator on Bob's system, $R_x =  \sum_i \kb{\phi_i}{\varphi_x} B_i^*$, with $R_xR_x^* = P_{\mathcal C}$ for each $x$. We can then build the full correction channel \bee \mathcal R(\tau) = \sum_{x=1}^a  (\bra{x}  \otimes R_x)\tau (\ket{x} \otimes R_x^*). \eee
Let $\ket{\psi} = \sum_i \alpha_i \ket{\phi_i}$ be any element of $\mathcal C = \mathrm{span}\,\{\ket{\phi_i}\}$ and set $\rho = \kb{\psi}{\psi}$. Then we have
\bee
\mathcal R \circ (\Phi_{QC} \otimes \mathrm{id}_B)(\rho) &=& \mathcal R \Big( \sum_{j} (V_j \ot I_B)\rho(V_j^* \ot I_B) \Big)  \\
& = & \frac{1}{a} \, \mathcal R \Big( \sum_{i,j,k} m_j \alpha_i\overline{\alpha_k}  \kb{j}{j} \otimes B_i \kb{ \varphi_j}{ \varphi_j}B_k^* \Big) \\
& = & \frac{1}{a}  \sum_{i,k,x} m_x \alpha_i\overline{\alpha_k}   R_xB_i \kb{ \varphi_x}{ \varphi_x}B_k^* R_x^* \\
& = & \frac{1}{a} \sum_x m_x  \big( \sum_{i,k} \alpha_i\overline{\alpha_k}   \kb{\phi_i}{\phi_k} \big) \\
& = & \rho ,
\eee
where the second equality above follows from the teleportation identity $(\kb{j}{\overline\varphi_j}\otimes I_A)\ket{\Phi} = \frac{1}{\sqrt{a}} \ket{j} \ket{\varphi_j}$; and the fourth equality follows from the mutual orthogonality of  the $\{ B_i \ket{\varphi_x}\}$. This means that if Alice and Bob share an unknown state in the code space $\mathcal C$, they can use the LOCC protocol to teleport Alice's half of the system to Bob's without requiring any additional entanglement.

The following result can be seen as a converse to Theorem \ref{thm: MaxEnt LOCC implies correctability}:

\begin{thm}\label{thm: correctability implies LOCC}
Let $\Phi$ be a quantum channel on $\mathcal{L}(\H_A)$ with Kraus operators $\{ A_j \}$ such that each $A_j$ is rank 1 and $A_j^*A_i = 0$ when $i \ne j$.

If there exists a correctable code $\mathcal C$ in  $\H_A \otimes \H_B$ for the noise model ${\mathcal E} = \Phi \otimes \mathrm{id}_B$, then any basis of $\mathcal C$ can be distinguished with one-way LOCC starting with Alice applying the measurement $\{ A_j \}$.\end{thm}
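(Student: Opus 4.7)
The plan is to reverse the direction of Theorem \ref{thm: MaxEnt LOCC implies correctability}: rather than extracting correctability from distinguishability, we show that the Knill-Laflamme conditions force Bob's post-measurement conditional states to be mutually orthogonal for each outcome Alice broadcasts, so Bob can finish the job with a local measurement.

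First, I would unpack the Kraus structure. Write each rank-one operator as $A_j = \sqrt{m_j}\ket{\mu_j}\bra{\varphi_j}$ with unit vectors $\ket{\mu_j}, \ket{\varphi_j}$ and $m_j > 0$; the hypothesis $A_i^*A_j = 0$ for $i \ne j$ then forces $\langle \mu_i | \mu_j \rangle = 0$. The POVM elements $\{A_j^*A_j\} = \{m_j \ket{\varphi_j}\bra{\varphi_j}\}$ are rank one and sum to $I_A$ (since $\Phi$ is trace preserving), so this is a legitimate complete local measurement for Alice. The correctability hypothesis gives scalars $\lambda_{ij}$ satisfying $P_\mathcal{C}(A_i^*A_j \otimes I_B)P_\mathcal{C} = \lambda_{ij}P_\mathcal{C}$, and the mutual orthogonality of the $A_j$'s forces $\lambda_{ij} = 0$ for $i \ne j$, leaving
\[
P_\mathcal{C}\big(\ket{\varphi_j}\bra{\varphi_j}\otimes I_B\big)P_\mathcal{C} = \gamma_j\, P_\mathcal{C}
\]
for scalars $\gamma_j := \lambda_{jj}/m_j \ge 0$ (positivity follows from the positivity of the left-hand side).

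Next, I would exhibit the protocol. Fix any orthonormal basis $\{\ket{\psi_i}\}$ of $\mathcal{C}$. Alice applies her measurement, and upon obtaining outcome $j$ on input $\ket{\psi_i}$ the (unnormalized) global state becomes
\[
(A_j \otimes I_B)\ket{\psi_i} = \sqrt{m_j}\,\ket{\mu_j}\otimes \ket{\beta_{ij}},
\]
where $\ket{\beta_{ij}} := (\bra{\varphi_j}\otimes I_B)\ket{\psi_i} \in \H_B$. After Alice transmits $j$ to Bob, it suffices for Bob to distinguish the family $\{\ket{\beta_{ij}}\}_i$ for that fixed $j$. Using $P_\mathcal{C}\ket{\psi_i} = \ket{\psi_i}$ and $\bra{\psi_k}P_\mathcal{C} = \bra{\psi_k}$, the displayed identity yields
\[
\langle \beta_{kj} | \beta_{ij} \rangle = \bra{\psi_k}\big(\ket{\varphi_j}\bra{\varphi_j}\otimes I_B\big)\ket{\psi_i} = \gamma_j \bra{\psi_k}P_\mathcal{C}\ket{\psi_i} = \gamma_j\, \delta_{ki},
\]
so the nonzero $\ket{\beta_{ij}}$ form an orthogonal set in $\H_B$. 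Bob performs a projective measurement extending the normalized $\ket{\beta_{ij}}$ and perfectly identifies $i$, completing the one-way LOCC protocol.

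The main subtlety, while not deep, is making sure the Knill-Laflamme scalars truly diagonalize under the rank-one plus mutual-orthogonality hypothesis on the $A_j$, and properly treating the $\gamma_j = 0$ case (which corresponds to outcomes of zero probability and contributes nothing to Bob's task). Everything else is routine bookkeeping; the essential mechanism is that correctability of $\mathcal{C}$ for $\mathcal{E}$ is exactly what is needed to force Bob's conditional states to be pairwise orthogonal outcome-by-outcome.
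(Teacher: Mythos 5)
Your proof is correct and follows essentially the same route as the paper's: both reduce the Knill--Laflamme conditions to the diagonal terms $P_{\mathcal C}(A_j^*A_j\otimes I_B)P_{\mathcal C}=\lambda_j P_{\mathcal C}$ and read off that Bob's conditional states for each outcome $j$ are pairwise orthogonal, so he finishes with a projective measurement. The only difference is cosmetic --- you work with the partial inner products $(\bra{\varphi_j}\otimes I_B)\ket{\psi_i}$ directly, while the paper passes through the operator identification $\ket{\phi_i}=(I\otimes M_i)\ket{\Phi}$ to write those states as $M_i\ket{\varphi_j}$.
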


\begin{proof}
Since each $A_j$ is rank one and $A_j^*A_i = 0$ for $i \ne j$, we can write $A_j = \sqrt{m_j} \kb{j}{ \overline{\varphi_j}}$. As $\mathcal C$ is correctable for $\mathcal E$, from the error correction conditions we have scalars $\lambda_j$ such that for every $j$,
\bee
P_{\mathcal C}(A_j^* \ot I_B)(A_j \ot I_B)P_{\mathcal C} = \lambda_j P_{\mathcal C}.
\eee
If ${\mathcal S} = \{ \ket{ \phi_i} = (I \ot M_i)\ket{\Phi} \}$ is a basis for ${\mathcal C}$, then this implies that for  every $j$ and for $i \ne k$,
\bee
0= \bra{\phi_i} (A_j^*A_j\ot I_B) \ket{\phi_k} = \frac{1}{d} \bra{{\varphi_j}} M_i^*M_k \ket{{\varphi_j} } .
\eee
If we suppose that Alice and Bob's system starts in one of the states in ${\mathcal S}$, then
when Alice measures and gets the result $j$,  Bob's system will be in the unnormalized state $M_i\ket{\varphi_j} $ for some $i$. Since these states are mutually orthogonal, Bob can determine the value of $i$ with a standard measurement.
\end{proof}

Concatenating the two proofs yields an interesting observation about one-way LOCC and maximally-entangled states:

\begin{cor}
Let ${\mathcal S} = \{ \ket{\phi_i} = (I \ot U_i)\ket{\Phi}\}$ be a set of mutually orthogonal maximally-entangled $\mathcal H_a \ot \mathcal H_b$ states with $a \le b$, and let ${\mathcal C}$ be the code space spanned by the $\{\ket{\phi_i}\}$.

If the elements of ${\mathcal S}$ can be perfectly distinguished with one-way LOCC, then {\it any} basis of ${\mathcal C}$ can also be distinguished with one-way LOCC using the same initial measurement on Alice's system; and for any $\ket{\phi}$ in ${\mathcal C}$, we can teleport the joint state $\ket{\phi}$ to Bob's system.
\end{cor}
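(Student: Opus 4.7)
The plan is to concatenate Theorem \ref{thm: MaxEnt LOCC implies correctability} and Theorem \ref{thm: correctability implies LOCC}, and then to invoke the explicit recovery channel from the discussion between them to obtain the teleportation statement.

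First I would apply Theorem \ref{thm: MaxEnt LOCC implies correctability} to $\mathcal S$: since the $\ket{\phi_i}$ are mutually orthogonal, maximally-entangled, and perfectly one-way LOCC distinguishable, the code space $\mathcal C = \mathrm{span}\{\ket{\phi_i}\}$ is correctable for $\mathcal E = \Phi_{QC}\otimes \mathrm{id}_B$, where $\Phi_{QC}$ is the quantum-classical channel coming from Alice's optimal measurement, with Kraus operators $V_j = \sqrt{m_j}\kb{j}{\varphi_j}$.

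Next I would verify the hypotheses of Theorem \ref{thm: correctability implies LOCC} for $\Phi_{QC}$: each $V_j$ is rank one, and a direct computation gives $V_i^*V_j = \delta_{ij}m_j\kb{\varphi_j}{\varphi_j}$, which vanishes whenever $i\ne j$. Theorem \ref{thm: correctability implies LOCC} then immediately yields that any basis of $\mathcal C$ can be distinguished by one-way LOCC using the same initial measurement on Alice's side -- up to the complex-conjugation relabeling used to write $A_j = \sqrt{m_j}\kb{j}{\overline{\varphi_j}}$ in the statement of that theorem, which affects neither the correctability condition nor the POVM Alice implements.

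Finally, for the teleportation claim I would quote the explicit recovery channel $\mathcal R$ constructed in the paragraph preceding Theorem \ref{thm: correctability implies LOCC}, namely $\mathcal R(\tau) = \sum_{x} (\bra{x}\otimes R_x)\tau(\ket{x}\otimes R_x^*)$ with $R_x = \sum_i \kb{\phi_i}{\varphi_x}B_i^*$. The calculation displayed there already establishes $\mathcal R\circ(\Phi_{QC}\otimes \mathrm{id}_B)(\rho) = \rho$ for every pure $\rho$ supported in $\mathcal C$, and since $\mathcal R$ consists of Alice broadcasting her classical outcome $x$ and Bob applying the local Kraus operator $R_x$ on his own system, the joint state in $\mathcal C$ is reconstructed on Bob's subsystem alone -- precisely teleportation of $\ket{\phi}$ from Alice to Bob. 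The main obstacle here is really just bookkeeping: reconciling the conjugation conventions between the two theorems, and observing that the recovery channel built in the discussion is itself a one-way LOCC operation (measurement by Alice followed by local correction by Bob). Consequently, no genuinely new computation beyond what is already laid out in the excerpt is required.
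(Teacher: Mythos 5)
Your proposal matches the paper's own argument: the corollary is stated there as following precisely by concatenating the proofs of Theorem \ref{thm: MaxEnt LOCC implies correctability} and Theorem \ref{thm: correctability implies LOCC}, with the teleportation claim supplied by the explicit recovery channel $\mathcal R(\tau) = \sum_x (\bra{x}\otimes R_x)\tau(\ket{x}\otimes R_x^*)$ constructed in the intervening discussion. Your additional care about the conjugation convention and the observation that $\mathcal R$ is itself a one-way LOCC operation are correct and consistent with the paper's treatment.
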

This gives a contrast to the examples in \cite{watrous2005bipartite, duan2009distinguishability} of spaces for which no basis is locally distinguishable (which are discussed more in the next section).  In this case, we have spaces for which {\it any} basis can be distinguished with one-way LOCC; and this is a necessary condition for a set of maximally-entangled states to be distinguished with one-way LOCC.


The familiar example of quantum teleportation \cite{Teleportation} can be seen as an illustration of the results in this section.

\begin{exa}
Let ${\mathcal B} = \{ \ket{\phi_{i,j}} =  (I \ot X^iZ^j) \ket{\Phi}\}$ be the generalized Bell basis for  $\C^n \ot \C^n$. If Alice and Bob share an unknown element in $\mathcal B$ plus an additional canonical maximally-entangled state $\ket{\Phi}$, then the state of their system is in ${\mathcal S} = \ket{\Phi} \ot {\mathcal B} = \{ \ket{\Phi}_{A_1B_1} \ot \ket{\phi_{i,j}}_{A_2B_2}\}$. It is easy to show that if Alice measures her bipartite system in the   basis ${\mathcal B}$, then Bob can complete the measurement and determine the value of $(i,j)$. If we let $\mathcal C$ be the span of the elements in ${\mathcal S}$, we get \bee P_{\mathcal C} &=& \kb{\Phi}{\Phi}_{A_1B_1} \ot I_{A_2  B_2} \\ {\mathcal L}({\mathcal C}) &=& \kb{\Phi}{\Phi} \otimes \mathcal L(\H_{AB}). \eee

By Theorem \ref{thm: MaxEnt LOCC implies correctability}, Alice's measurement $\Phi_{QC}$ is correctable on the entire code space; there exists a correction operator $\mathcal R$ so that
\bee
\mathcal R \circ \left(\Phi_{QC} \otimes \mathrm{id}_B\right) (\rho) = \rho
\eee
for every $\rho \in \kb{\Phi}{\Phi} \otimes \mathcal L(\H_{AB})$. Indeed, this is the familiar protocol for quantum teleportation.

Applying  Theorem \ref{thm: correctability implies LOCC} to this protocol  affirms that, since we can teleport Alice's half of an unknown element of ${\mathcal C}$ with one-way LOCC, any orthogonal basis of ${\mathcal C}$ can be subsequently distinguished by Bob.\end{exa}

\section{More General Correction Algebras and One-Way LOCC Applications}

Expanding our investigation,  this section considers more general error correction contexts and connections with LOCC. We shall work within the general framework called ``operator algebra quantum error correction'' (OAQEC), which includes standard Knill-Laflamme quantum error correction as a special case but also includes classical and hybrid classical-quantum error correction as other distinguished special cases. To avoid having to introduce extra nomenclature, at the end of this section (see Remark~\ref{oaqecremark}) we briefly draw the connection with OAQEC more explicitly and provide some literature entrance points for the subject.

We next consider the case of correctable commutative algebras and extracting classical information in the LOCC setting.

\subsection{Extracting classical information} It is evident that quantum error correction and LOCC are not equivalent theories in general. Indeed, on the one hand the Knill-Laflamme conditions are not sufficient for LOCC purposes; Bob does not have access to the output of all of Alice's classical information, just a single outcome. But on the other hand the conditions are seen to be too strong in their full generality; in quantum state discrimination, we need only identify a set of initial states $\mathcal S$, not an entire subspace. It is this last point we focus on now, making the distinction between a correctable subspace, in which any element of a subspace can be recovered, and a correctable  set $\mathcal S$, which is correctable if any element of $\mathcal S$ can be recovered with certainty.

\begin{prop}\label{commutecorrect}
Let $\mathcal C$ be a $d$-dimensional subspace of a Hilbert space $\mathcal H$, and let $\mathcal E(\rho) = \sum_i A_i \rho A_i^*$ be a channel on $\mathcal H$.

If $\mathcal C$ has a basis $\mathcal S = \{ \ket{\phi_k}\}$ such that the set of states of $\mathcal S$ is individually correctable for $\mathcal E$, then the set of operators $\{ P_{\mathcal C} A_j^*A_i  P_{\mathcal C}\}_{i,j}$ are mutually commuting. In particular, the dimension of the span of $\{ P_{\mathcal C} A_j^*A_i  P_{\mathcal C}\}$ is at most $d$.

Conversely, if the operators $\{ P_{\mathcal C} A_j^*A_i  P_{\mathcal C}\}_{i,j}$ are normal, non-zero and mutually commuting, then there is a basis for $\mathcal C$ that is individually correctable for $\mathcal E$.
\end{prop}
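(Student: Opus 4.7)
The plan is to reformulate ``individually correctable'' as a condition on matrix entries of the operators $M_{ij} := P_{\mathcal C} A_j^* A_i P_{\mathcal C}$ in a chosen basis of $\mathcal C$, and then derive both directions from the spectral theorem for commuting normal operators.

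First I would establish the right characterization of the hypothesis: a set of mutually orthogonal pure states $\{\ket{\phi_k}\}$ is individually correctable for $\mathcal E$ if and only if the outputs $\mathcal E(\kb{\phi_k}{\phi_k})$ have pairwise orthogonal supports. A single recovery channel then exists by distinguishing these supports with a projective measurement and resetting to the appropriate $\ket{\phi_k}$, while the converse direction follows from the contractivity of quantum channels on trace distance applied to perfectly distinguishable pure states. Unpacking the orthogonality of supports in terms of Kraus operators, this condition is equivalent to $\bra{\phi_l} A_j^* A_i \ket{\phi_k} = 0$ for all $i,j$ whenever $l \neq k$, which says exactly that each $M_{ij}$ is diagonal in the basis $\{\ket{\phi_k}\}$.

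For the forward direction, simultaneous diagonality of the $M_{ij}$ in one orthonormal basis of $\mathcal C$ immediately implies mutual commutativity, and it places the span of $\{M_{ij}\}$ inside the $d$-dimensional commutative algebra of operators on $\mathcal C$ that are diagonal in this basis (extended by zero on $\mathcal C^\perp$), giving the dimension bound. For the converse, assuming the $M_{ij}$ are mutually commuting and normal (and each vanishes on $\mathcal C^\perp$ by construction), the finite-dimensional spectral theorem for commuting normal families supplies a common orthonormal eigenbasis $\{\ket{\phi_k}\}$ of $\mathcal C$ that simultaneously diagonalizes all of them. This reinstates the matrix-element condition $\bra{\phi_l}A_j^*A_i\ket{\phi_k}=0$ for $l\neq k$, so by the first step $\{\ket{\phi_k}\}$ is a basis of $\mathcal C$ that is individually correctable for $\mathcal E$.

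The main subtlety will be justifying the characterization of individual correctability in the first step cleanly from the definition used in the paper; once that is in hand, both directions are routine applications of simultaneous diagonalization. The normality hypothesis on the $M_{ij}$ in the converse is exactly what is needed for the spectral theorem to deliver an orthonormal eigenbasis (commuting non-normal operators only co-triangularize in general), and the ``non-zero'' hypothesis serves to exclude the degenerate case in which $\mathcal E$ annihilates $\mathcal C$ outright and the recovery problem becomes vacuous.
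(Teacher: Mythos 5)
Your proposal is correct and follows essentially the same route as the paper: both characterize individual correctability by the pairwise orthogonality of the channel outputs $\tr\big(\mathcal E(\kb{\phi_k}{\phi_k})\mathcal E(\kb{\phi_l}{\phi_l})\big)=0$, unpack this via the Kraus decomposition into $\bra{\phi_l}A_j^*A_i\ket{\phi_k}=0$ for $l\neq k$ (i.e.\ simultaneous diagonality of the $P_{\mathcal C}A_j^*A_iP_{\mathcal C}$), and invoke the spectral theorem for commuting normal families in the converse. The only difference is that you explicitly justify the equivalence between correctability and output orthogonality (recovery by measuring the supports, and contractivity of trace distance for the converse), a step the paper simply asserts.
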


\begin{proof}
The states of $\mathcal S$ being correctable for $\mathcal E$ is equivalent to the identities:
\begin{equation}\label{commute1}
\tr (\mathcal E(\kb{\phi_k}{\phi_k})\mathcal E(\kb{\phi_l}{\phi_l})) = 0,
\end{equation}
whenever $k \ne l$. Since the output of the  channel is written as a sum of positive semidefinite matrices, this is equivalent to the statement that for every $i,j$ and $k \ne l$,
\begin{equation}\label{commute2}
0 = \tr ( A_j \kb{\phi_l}{\phi_l} A_j^*A_i \kb{\phi_k}{\phi_k} A_i^* ) = \left \vert \bra{\phi_l} A_j^*A_i \ket{\phi_k} \right\vert^2 .
\end{equation}
This implies that for all $i,j$, the matrix representation for $P_{\mathcal C} A_j^*A_i P_{\mathcal C}$ is diagonal in the basis $\mathcal S$ when $\mathcal S$ is correctable for $\mathcal E$.

Conversely, if the operators $\{ P_{\mathcal C} A_j^*A_i  P_{\mathcal C}\}_{i,j}$ are normal and mutually commuting, then they are simultaneously diagonalizable by the spectral theorem, and so there is a basis $\mathcal S = \{ \ket{\phi_k} \}$ for $\mathcal C$ and scalars $\lambda_k^{(i,j)}$ such that $P_{\mathcal C} A_j^*A_i  P_{\mathcal C} = \sum_{k=1}^d \lambda_k^{(i,j)} \kb{\phi_k}{\phi_k}$. Thus, the above argument may be reversed to show that first Eqs.~(\ref{commute2}) hold and then Eqs.~(\ref{commute1}) as well.
\end{proof}



Let us consider applications of this perspective. In the  example of Watrous \cite{watrous2005bipartite}, it was shown that the orthogonal complement of a maximally-entangled state in two-qutrit space cannot be distinguished with LOCC (not just one-way LOCC). This was generalized in \cite{duan2009distinguishability}. Here, we  connect their result to our current discussion and obtain a new proof. We point to \cite{nielsen} for basics on the Schmidt decomposition and rank of a bipartite pure state.


\begin{cor}\cite{duan2009distinguishability}
Let ${\mathcal C}$ be the orthogonal complement of a state  $\ket{\phi} \in \C^d \otimes \C^d$.

No basis for $\mathcal C$ can be distinguished with one-way LOCC if the Schmidt rank of $\ket{\phi}$ is greater than two.
\end{cor}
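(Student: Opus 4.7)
Suppose for contradiction that some basis $\{\ket{\phi_k}\}_{k=1}^{d^2-1}$ of $\mathcal{C} = \ket{\phi}^\perp$ is distinguishable under one-way LOCC, and let $\{m_j\ket{\varphi_j}\bra{\varphi_j}\}_j$ be Alice's optimal rank-$1$ POVM. The Kraus operators $V_j\otimes I_B$ of $\mathcal{E} = \Phi_{QC}\otimes\mathrm{id}_B$ (with $V_j = \sqrt{m_j}\ket{j}\bra{\varphi_j}$) satisfy $(V_j\otimes I_B)^*(V_i\otimes I_B) = \delta_{ij}m_j\Pi_j$ for $\Pi_j := \ket{\varphi_j}\bra{\varphi_j}\otimes I_B$, so Proposition~\ref{commutecorrect} forces the compressions $Q_j := P_\mathcal{C}\Pi_j P_\mathcal{C}$ to pairwise commute on $\mathcal{C}$, and $\{\ket{\phi_k}\}$ is a simultaneous eigenbasis.

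The key step is a spectral analysis of each $Q_j$. For $\ket{\psi}\in\mathcal{C}$ we have $Q_j\ket{\psi} = \Pi_j\ket{\psi} - \langle\phi|\Pi_j|\psi\rangle\ket{\phi}$. Setting $Q_j\ket{\psi} = \mu\ket{\psi}$, applying $\Pi_j$ to the relation $\Pi_j\ket{\psi} = \mu\ket{\psi} + \langle\phi|\Pi_j|\psi\rangle\ket{\phi}$ and using $\Pi_j^2 = \Pi_j$ yields
\[
\mu(1-\mu)\ket{\psi} \;=\; \langle\phi|\Pi_j|\psi\rangle\,\bigl[\,\ket{\tilde\beta_j} - (1-\mu)\ket{\phi}\,\bigr],
\]
where $\ket{\tilde\beta_j} := \Pi_j\ket{\phi} = \ket{\varphi_j}\otimes\ket{\beta_j}$ and $\ket{\beta_j} := (\bra{\varphi_j}\otimes I)\ket{\phi}$. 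This confines the spectrum of $Q_j$ to $\{0,1\}$ together with at most one exceptional eigenvalue $\mu_j^\star = 1 - c_j$ (where $c_j := \|\beta_j\|^2 \in (0,1)$ whenever $\ket{\beta_j}\ne 0$; the alternative $c_j = 1$ would make $\ket{\phi}$ a product state). The unique $\mu_j^\star$-eigenvector is $\ket{\gamma_j} := \ket{\tilde\beta_j} - c_j\ket{\phi}$, while the $\mu=0$ and $\mu=1$ eigenspaces are $\mathcal{C}\cap(\ket{\varphi_j}^\perp\otimes\mathcal{H}_B)$ and $\mathcal{C}\cap(\ket{\varphi_j}\otimes\mathcal{H}_B)$ respectively.

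Specializing first to a projective Alice measurement (so $\langle\varphi_j|\varphi_k\rangle=\delta_{jk}$), for any $j,k$ with $\ket{\beta_j},\ket{\beta_k}\ne 0$ the vector $\ket{\gamma_j}$ must also be an eigenvector of $Q_k$. A direct computation gives $Q_k\ket{\gamma_j} = -c_j\ket{\gamma_k}$, and the non-exceptional alternatives are excluded (eigenvalue $0$ would require $\ket{\beta_k}=0$; eigenvalue $1$ would require Schmidt rank $1$), so $\ket{\gamma_j}\propto\ket{\gamma_k}$. Hence all nonzero $\ket{\gamma_j}$ lie on a common complex line $\mathbb{C}\ket{\gamma^\star}$, placing every $\ket{\tilde\beta_j}$ in the $2$-dimensional subspace $V := \mathrm{span}(\ket{\phi},\ket{\gamma^\star})$. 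Because $\{\ket{\varphi_j}\}$ is orthonormal, at most two of the $\ket{\beta_j}$ can be nonzero: for three linearly dependent products $\ket{\varphi_{j_i}}\otimes\ket{\beta_{j_i}}\in V$, applying $\bra{\varphi_{j_3}}\otimes I$ to the dependence relation isolates the third coefficient and forces $\ket{\beta_{j_3}}=0$. Combined with $\ket{\phi} = \sum_j\ket{\tilde\beta_j}$ (from $\sum_j\Pi_j = I$ in the projective case), this presents $\ket{\phi}$ as a sum of at most two Alice-orthogonal product states, i.e.\ of Schmidt rank at most $2$, contradicting the hypothesis. The main obstacle will be handling non-projective rank-$1$ POVMs: the ranges of $\Pi_j$ and $\Pi_k$ are no longer orthogonal, and the computation of $Q_k\ket{\gamma_j}$ picks up additional terms involving $\langle\varphi_k|\varphi_j\rangle$; the same two-dimensional geometry should still emerge from the commutativity identity, but only after carefully tracking the commutator $[\Pi_j,\Pi_k]$.
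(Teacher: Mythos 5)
Your spectral analysis of the operators $Q_j = P_{\mathcal C}(\kb{\varphi_j}{\varphi_j}\otimes I_B)P_{\mathcal C}$ is correct and the projective case is handled cleanly: the identity $Q_k\ket{\gamma_j}=-c_j\ket{\gamma_k}$, the collapse of all the exceptional eigenvectors onto a single line, and the conclusion that $\ket{\phi}=\sum_j\ket{\tilde\beta_j}$ has at most two nonzero terms all check out. But the argument as written proves only a weaker statement, and you have correctly identified the gap yourself: it is not a technicality to be deferred. To show that \emph{no} basis of $\mathcal C$ is one-way distinguishable you must rule out \emph{every} first measurement by Alice, and the optimal first measurement is in general a rank-one POVM $\{m_j\kb{\varphi_j}{\varphi_j}\}_{j=1}^r$ with $r>d$ and non-orthogonal $\ket{\varphi_j}$ --- exactly the setting the paper sets up in Section~2. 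There is no ``without loss of generality'' reduction to projective measurements here. Your key computation uses $\Pi_k\Pi_j=0$ twice (once to get $\Pi_k\ket{\tilde\beta_j}=0$ and once to get $\braket{\tilde\beta_k|\tilde\beta_j}=0$), and both fail when $\braket{\varphi_k|\varphi_j}\neq 0$; the resulting $Q_k\ket{\gamma_j}$ is then a combination of $\ket{\gamma_k}$, $\ket{\gamma_j}$ and cross terms, and the proportionality of the $\ket{\gamma_j}$'s no longer follows. Since the whole contradiction rests on that proportionality, the proof is incomplete.

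For comparison, the paper's proof avoids the issue entirely by never pairing two measurement outcomes against each other. It fixes a single simultaneous eigenvector $\ket{\psi}\in\mathcal C$ of all the $Q_j$ chosen so that $\alpha_1=\bra{\phi}(\kb{\varphi_1}{\varphi_1}\otimes I_B)\ket{\psi}\neq 0$, writes $\ket{\psi}=(I\otimes B)\ket{\Phi}$ so that $(\kb{\varphi_j}{\varphi_j}\otimes I_B)\ket{\psi}\propto\ket{\varphi_j}\otimes B\ket{\overline{\varphi}_j}$, and reads off from the eigenvector equations that every such product state lies in $\mathrm{span}\{\ket{\phi},\ket{\psi}\}$. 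Two linearly independent ones then force $\ket{\phi}$ into the span of two product states. Nothing in that argument needs $\braket{\varphi_j|\varphi_k}=0$, which is what makes it survive the general POVM case. If you want to salvage your approach, the most promising route is to replace your pairwise identity by this ``single eigenvector, all outcomes'' bookkeeping rather than trying to track the commutator $[\Pi_j,\Pi_k]$.
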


\begin{proof}
Let $P = I - \kb{\phi}{\phi}$ be the projection onto the space $\mathcal C$. Suppose that $\mathcal C$ has a basis of states that can be distinguished with one-way LOCC. Then there exists a measurement on Alice's system, denoted by $\{ m_j \kb{\varphi_j}{\varphi_j} \}_{j = 1}^r$, such that the operators $\{Q_j:= P (\kb{\varphi_j}{\varphi_j} \otimes I_B) P  \}$ are mutually commuting and share a basis of eigenvectors.

Suppose that $\ket{\phi}$ is not a product state. Then for each $j$, $(\kb{\varphi_j}{\varphi_j} \otimes I_B) \ket{\phi}$ is not a nonzero multiple of  $\ket{\phi}$. Without loss of generality, we assume that $(\kb{\varphi_1}{\varphi_1} \otimes I_B) \ket{\phi} \ne 0$.  This implies that there must exist an eigenvector $\ket{\psi}$ of $Q_1$ such that $\bk{\phi}{\psi} = 0$ but  $\alpha_1 := \bra{\phi} (\kb{\varphi_1}{\varphi_1} \otimes I_B) \ket{\psi} \ne 0$. Writing $\ket{\psi}= (I_A \ot B)\ket{\Phi}$ and $Q_j\ket{\psi} = \mu_j\ket{\psi}$, we get
 \bee
Q_1 \ket{\psi} &= & (\kb{\varphi_1}{\varphi_1} \otimes I_B)\ket{\psi} - \kb{\phi}{\phi}(\kb{\varphi_1}{\varphi_1} \otimes I_B)\ket{\psi}\\
\mu_1 \ket{\psi} & = & \ket{\varphi_1} \ot B\ket{\overline{\varphi}_1} - \alpha_1 \ket{\phi}.  \eee
If $\ket{\psi}$ is a product state, then $\ket{\phi}$ is a linear combination of two product states and has Schmidt rank at most 2. (Here we use that $\alpha_1 \ne 0$.)  If $\ket{\psi}$ is not a product state, then for some $x \ne 1$, $B\ket{\overline\varphi_x} \ne 0$ and $B\ket{\overline\varphi_1}$ and  $B\ket{\overline\varphi_x}$ are linearly independent. By the same calculation above, $\ket{\varphi_x} \ot B\ket{\overline{\varphi}_x}$ is in the span of $\ket{\phi}$ and $\ket{\psi}$. Hence we have
\bee
\mbox{span}\{\ket{\varphi_1} \ot B\ket{\overline{\varphi}_1},\ket{\varphi_x} \ot B\ket{\overline{\varphi}_x}\} \subseteq \mbox{span} \{\ket{\phi} ,\ket{\psi}\}.
\eee
Since the vectors on the left are linearly independent, the two spans are the same and we have
\bee
\ket{\phi} \in \mbox{span} \{\ket{\varphi_1} \ot B\ket{\overline{\varphi}_1},\ket{\varphi_x} \ot B\ket{\overline{\varphi}_x}\} .
\eee
Thus $\ket{\phi}$ is a linear combination of two product states and has Schmidt rank at most 2.
\end{proof}

Proposition \ref{commutecorrect} suggests ways to find LOCC-distinguishable bases of given subsystems; or to show that they don't exist. One promising application would be the following numerical conjecture of King: 

\begin{con}\label{KingConjuecture}\cite{king2007existence} Let ${\mathcal C}$ be a three-dimensional subspace of $\C^3 \otimes \C^n$, with $n \ge 3$. Then there is
an orthonormal basis of ${\mathcal C}$ which can be reliably distinguished using one-way LOCC, where
measurements are made first on $\C^3$ and the result used to select the optimal measurement
on $\C^n$ .
\end{con}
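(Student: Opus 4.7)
The plan is to use Proposition \ref{commutecorrect} to convert the conjecture into a concrete problem about compressions of rank-one projections on $\C^3$ to the subspace $\mathcal{C}$. Specifically, by the converse direction of that proposition, it suffices to produce a rank-one POVM $\{m_j \ket{\varphi_j}\bra{\varphi_j}\}_{j=1}^r$ on $\C^3$ whose compressions $Q_j := P_{\mathcal{C}}(\ket{\varphi_j}\bra{\varphi_j} \otimes I_B)P_{\mathcal{C}}$ to $\mathcal{C}$ are pairwise commuting Hermitian operators; any common eigenbasis of the $Q_j$ then furnishes the desired one-way LOCC distinguishable basis of $\mathcal{C}$.

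First I would attempt the simplest case of a rank-one PVM with $r=3$ outcomes, namely an orthonormal basis $\{\ket{\varphi_1}, \ket{\varphi_2}, \ket{\varphi_3}\}$ of $\C^3$. Since $\sum_j Q_j = P_{\mathcal{C}}$ acts as the identity on $\mathcal{C}$, it is enough to arrange $[Q_1, Q_2] = 0$; the remaining commutators then vanish automatically. Fixing a candidate basis $\{\ket{\phi_1}, \ket{\phi_2}, \ket{\phi_3}\}$ of $\mathcal{C}$ and setting $\rho_{k\ell} := \tr_B(\ket{\phi_\ell}\bra{\phi_k}) \in \mathcal{L}(\C^3)$, the condition that $\{\ket{\phi_k}\}$ be a common eigenbasis of all $Q_j$ becomes $\bra{\varphi_j} \rho_{k\ell} \ket{\varphi_j} = 0$ for every $j$ and every $k \neq \ell$. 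Each $\rho_{k\ell}$ is automatically traceless since $\tr(\rho_{k\ell}) = \bk{\phi_k}{\phi_\ell} = 0$, so the problem is at least not immediately obstructed by a trace obstruction.

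The core lemma to establish is then: for any such $\mathcal{C}$ one can jointly choose the orthonormal basis $\{\ket{\phi_k}\}$ of $\mathcal{C}$ and the orthonormal basis $\{\ket{\varphi_j}\}$ of $\C^3$ so that all three matrices $\rho_{12}, \rho_{13}, \rho_{23}$ have vanishing diagonal in the $\{\ket{\varphi_j}\}$ basis (the relations $\rho_{\ell k} = \rho_{k\ell}^*$ handle the remaining three off-diagonals for free). The single-matrix version---a traceless $3 \times 3$ operator admits a zero-diagonal orthonormal basis---is classical. For the simultaneous version, two natural strategies are: (i) exploit the $U(3)$ freedom in the basis of $\mathcal{C}$ to first put $\rho_{12}$ in a strictly upper-triangular normal form, and then search in the residual degrees of freedom for an ONB of $\C^3$ that also kills the diagonals of $\rho_{13}$ and $\rho_{23}$; or (ii) apply a continuity or degree-theoretic argument on the flag variety of $\C^3$, leveraging the mean-zero property of $\ket{\varphi} \mapsto \bra{\varphi}\rho_{k\ell}\ket{\varphi}$ on the unit sphere to produce simultaneous zeros.

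The main obstacle is that a naive dimension count makes the problem overdetermined: the space of ordered orthonormal bases of $\C^3$ modulo phases has real dimension $6$, while the commutator $[Q_1, Q_2]$ is a traceless anti-Hermitian $3 \times 3$ matrix carrying $8$ real parameters, and enlarging the POVM to more outcomes only adds further commutators and makes the count worse. Any successful proof must therefore exploit special structure beyond generic position---in particular, the fact that the $Q_j$ are compressions of rank-one positive operators of a highly constrained form rather than arbitrary Hermitian matrices, and the coincidence $\dim \mathcal{C} = \dim \H_A = 3$. Promising directions include complex-algebraic arguments (Schubert or Bezout-type intersection counts on the incidence variety of flags together with Alice's POVM in a suitable Grassmannian), and structural case analysis based on how $\mathcal{C}$ intersects the product states in $\C^3 \otimes \C^n$, potentially combined with the low-Schmidt-rank dichotomy used in the corollary to Proposition \ref{commutecorrect}.
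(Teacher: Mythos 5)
The statement you are trying to prove is labelled as a \emph{conjecture} in the paper, and the paper does not prove it: it only (a) restates it equivalently as the existence of two orthogonal states $\ket{\varphi_1},\ket{\varphi_2}\in\C^3$ whose compressions $P(\kb{\varphi_i}{\varphi_i}\otimes I_n)P$ commute, and (b) proves a \emph{sufficient} condition (Proposition~\ref{Subspace condition for distinguishability}) covering the case where $\mathcal X=\operatorname{span}\{B_i^*B_j\}$ is a proper subspace of $M_3$, e.g.\ when $\mathcal C$ contains two orthogonal maximally entangled states. Your reduction is correct as far as it goes and coincides with the paper's: passing through Proposition~\ref{commutecorrect}, restricting to a rank-one PVM, noting that $Q_3=P_{\mathcal C}-Q_1-Q_2$ on $\mathcal C$ so only $[Q_1,Q_2]=0$ need be arranged, and translating the common-eigenbasis condition into the vanishing of the diagonals of the traceless matrices $\rho_{k\ell}=\tr_B\kb{\phi_\ell}{\phi_k}$ in Alice's basis. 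All of that is a faithful reformulation, not progress.

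The genuine gap is that your ``core lemma'' --- that one can simultaneously choose the basis of $\mathcal C$ and an orthonormal basis of $\C^3$ killing the diagonals of $\rho_{12},\rho_{13},\rho_{23}$ at once --- \emph{is} the conjecture, and you do not prove it; you only list candidate strategies (triangularization plus residual freedom, degree/continuity arguments, Schubert calculus) without carrying any of them out. Your own dimension count ($6$ real parameters of measurement basis against $8$ real conditions from the traceless anti-Hermitian commutator) correctly identifies why no generic-position or transversality argument can close this, and nothing in the proposal supplies the special structure needed to beat that count. If you want to make actual progress along the paper's lines, the productive move is the one you did not take: study the nine-dimensional span $\mathcal X$ and the unital map $\Psi(\tau)=\frac13\sum_{i,j}\kb{i}{j}\tr(\tau^TB_i^*B_j)$, observe that any nonzero $M$ with $\Psi(M)=0$ is traceless and hence indefinite, and use unitality to show that the spectral projections of $M$ give a measurement basis with commuting compressions. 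That yields the conjecture whenever $\dim\mathcal X\le 8$, but the case $\mathcal X=M_3$ remains open both in the paper and in your proposal.
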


We can restate this conjecture in the language of our results:
\begin{con} Let ${\mathcal P}$ be the projection onto  a three-dimensional subspace of $\C^3 \otimes \C^n$, with $n \ge 3$. Then there exist orthogonal states $\ket{ \varphi_1}$ and $\ket{ \varphi_2}$ in $\C^3$ such that ${\mathcal P} \left( \kb{ \varphi_1}{\varphi_1} \otimes I_n\right)  {\mathcal P}$ and ${\mathcal P} \left( \kb{ \varphi_2}{\varphi_2} \otimes I_n\right)  {\mathcal P}$ commute.
\end{con}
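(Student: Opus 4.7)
The plan is to reformulate the conjecture as a clean statement about unital completely positive maps $M_3 \to M_3$, and then pursue a variational argument on a flag manifold. Writing $\mathcal{P} = VV^*$ for an isometry $V : \C^3 \to \C^3 \otimes \C^n$ with range $\mathcal{C}$, the compression $T_\varphi := V^*(\kb{\varphi}{\varphi} \otimes I_n) V \in M_3$ is unitarily equivalent to the restriction of $\mathcal{P}(\kb{\varphi}{\varphi} \otimes I_n)\mathcal{P}$ to $\mathcal{C}$, so $\mathcal{P}(\kb{\varphi_1}{\varphi_1}\otimes I_n)\mathcal{P}$ and $\mathcal{P}(\kb{\varphi_2}{\varphi_2}\otimes I_n)\mathcal{P}$ commute if and only if $T_{\varphi_1}$ and $T_{\varphi_2}$ do. The map $\Psi(X) := V^*(X \otimes I_n) V$ is a unital CP map $M_3 \to M_3$, and by Stinespring every unital CP map on $M_3$ has this form for some $n$. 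The conjecture thus becomes: for every unital CP map $\Psi : M_3 \to M_3$, there exist orthogonal unit vectors $\ket{\varphi_1}, \ket{\varphi_2} \in \C^3$ with $[\Psi(\kb{\varphi_1}{\varphi_1}), \Psi(\kb{\varphi_2}{\varphi_2})] = 0$. A useful first observation is that extending to an ONB $\{\ket{\varphi_1}, \ket{\varphi_2}, \ket{\varphi_3}\}$ yields $T_{\varphi_1} + T_{\varphi_2} + T_{\varphi_3} = \Psi(I) = I$, so commutativity of a single pair already forces the entire triple to be simultaneously diagonalizable.

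The next step is variational. On the compact flag manifold $\mathcal{F}$ of pairs of orthogonal lines in $\C^3$ (real dimension $6$), define the nonnegative smooth function $g(\varphi_1, \varphi_2) := \tr\bigl([T_{\varphi_1}, T_{\varphi_2}]^*[T_{\varphi_1}, T_{\varphi_2}]\bigr)$. It attains its infimum at some $(\varphi_1^*, \varphi_2^*)$, and the whole problem reduces to showing this infimum is $0$. The first-order conditions at the minimizer, computed via the derivative formula $\delta T_\varphi = \Psi(\kb{\delta\varphi}{\varphi} + \kb{\varphi}{\delta\varphi})$ for tangent directions $\delta\varphi_k \perp \varphi_k^*$, yield a system of trace identities relating $[T_{\varphi_1^*}, T_{\varphi_2^*}]$ to the derivatives of the $T_{\varphi_i}$. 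I would try to combine these identities with the positivity of each $T_{\varphi_i}$, the unital constraint $\sum_i T_{\varphi_i} = I$, and the fact that $T_\varphi$ depends only on the rank-one projection $\kb{\varphi}{\varphi}$, to force the commutator itself to vanish at $(\varphi_1^*, \varphi_2^*)$.

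The hard part, and the main obstacle, is a dimension mismatch: the commutator $[T_{\varphi_1}, T_{\varphi_2}]$ lives in the $8$-real-dimensional space of traceless skew-Hermitian $3 \times 3$ matrices, whereas $\mathcal{F}$ has dimension only $6$, so a generic transversality or degree argument cannot succeed, and the special structure of $\Psi$ must really be exploited. Two auxiliary strategies I would pursue in parallel are: (i) reducing to small $n$ --- since the Bob-side partial vectors of any basis of $\mathcal{C}$ span a subspace of dimension at most $9$, one may assume $n \le 9$, and the resulting explicit Choi-matrix parameterization may allow direct analysis or computer verification of low-dimensional sub-cases; and (ii) handling degenerate configurations directly, for example when $\mathcal{C}$ contains a product vector or when some $T_\varphi$ has a degenerate spectrum, and then using continuity/perturbation around these loci to cover the generic regime. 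Progress on either front would likely shed enough light on the first-order geometry to complete the variational argument above.
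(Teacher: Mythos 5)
The statement you are addressing is a conjecture, not a theorem: it is the paper's restatement of King's numerical conjecture, and the paper offers no proof of it. The only things the paper actually establishes are (a) the equivalence with King's original formulation, which rests on the same observation you make --- that $T_{\varphi_1}+T_{\varphi_2}+T_{\varphi_3}=\Psi(I)=I$, so commutativity of one pair forces the whole triple to commute --- and (b) a sufficient condition (Proposition~\ref{Subspace condition for distinguishability}): if $\mathcal X=\spn\{B_i^*B_j\}_{i,j=1}^3$ is a \emph{proper} subspace of $M_3$, then a commuting pair can be produced explicitly by diagonalizing a nonzero traceless $M$ in the kernel of the unital map $\Psi$ and writing $I=\Psi(I)$ as a positive combination of two of the $\Psi(\kb{\phi_i}{\phi_i})$. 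So there is no ``paper proof'' against which to match your argument.

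Your proposal, for its part, is not a proof either, and the gap is exactly where you suspect it is. The reformulation in terms of unital CP maps $\Psi:M_3\to M_3$ is correct and coincides with the paper's setup, and compactness of the flag manifold does give a minimizer of $g$. But the entire content of the conjecture is the claim that this minimum is zero, and nothing in your outline delivers that: the first-order conditions are never actually written down, and you supply no mechanism by which they, combined with positivity of the $T_{\varphi_i}$ and the constraint $\sum_i T_{\varphi_i}=I$, would force $[T_{\varphi_1^*},T_{\varphi_2^*}]=0$ rather than merely yield a critical configuration with nonvanishing commutator. You correctly flag the obstruction yourself --- the commutator lives in an $8$-real-dimensional space while the flag manifold has dimension $6$, so no transversality or degree count can close the argument, and the special structure of $\Psi$ must enter in a way you have not specified. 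Your auxiliary strategy (i) is worth pursuing, but note that the regime in which the operators $B_i^*B_j$ fail to span $M_3$ is precisely the case the paper already settles in Proposition~\ref{Subspace condition for distinguishability}; the genuinely open case is $\mathcal X=M_3$, equivalently $\Psi$ injective, and that is where any new argument must do its work.
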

The equivalence of the conjectures follows immediately from the fact that if the two operators commute with each other, then they also commute with ${\mathcal P} \left( \kb{ \varphi_3}{\varphi_3} \otimes I_n\right)  {\mathcal P}$ if $\sum_i \kb{ \varphi_i}{\varphi_i}= I_3$.

We can use our results to establish sufficient conditions under which the conjectured basis exists. To do this, we look at the subspace $\mathcal X := \mbox{span} \{ B_i^*B_j \}_{i,j=1}^3$ and the associated map $\Psi: M_3 \rightarrow M_3$ defined by
\bee
\Psi(\tau) =\sum_{i,j} \kb{i}{\Phi_i}(\tau \ot I)\kb{\Phi_j}{j} = \frac{1}{3} \sum_{i,j} \kb{i}{j} \tr \tau^T B_i^*B_j .
\eee
If $\Psi$ is trace-preserving, then it is the complementary channel of the CP map $\Phi(\tau) = \frac{1}{3} \sum_i B_i\tau B_i^*$ \cite{holevo2007complementary,king2005properties}, and in this case the set $\mathcal X$ is the non-commutative graph associated with the channel $\Phi$ \cite{duan2013zero}. In general, $\Psi$ is  not a channel, but the analogy with complementary channels and non-commutative graphs is still relevant. Note also that the operator system $\mathfrak{S}_0$  generated by $\{ B_i^*B_j: i \ne j \}$ was studied in \cite{kribs2016operator} to establish necessary and sufficient conditions for one-way LOCC discrimination of a {\it particular} basis of ${\mathcal C}$; these results are expanded in Section \ref{QEC Formalism}. In the current discussion, $\mathfrak{S}_0 \subset \mathcal X$ whenever $\mathcal X$ contains the identity.

We use the subspace $\mathcal X$ to prove to following condition:
\begin{prop}\label{Subspace condition for distinguishability}
Let ${\mathcal C}$ be a $3$-dimensional subspace of $\C^3 \ot C^n$ spanned by the orthogonal basis $\{ (I \ot B_i)\ket{\Phi} \}_{i = 1}^3$, and consider the set
\bee
\mathcal X = \mbox{span}(\{ B_i^*B_j \}_{i,j=1}^3)
\eee
If $\mathcal X$ is a strict subspace of the $3 \times 3$ matrices $M_3$, then there exists a basis of  ${\mathcal C}$ that is distinguishable with one-way LOCC; and this basis may be easily computed.
\end{prop}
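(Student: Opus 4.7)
The plan is to invoke the restatement of Conjecture~\ref{KingConjuecture} recorded just before the proposition, which is itself a consequence of Proposition~\ref{commutecorrect}: to obtain a one-way LOCC distinguishable basis of $\mathcal C$ it suffices to produce orthogonal $\ket{\varphi_1},\ket{\varphi_2}\in\C^3$ for which the operators $N_k := P_{\mathcal C}(\kb{\varphi_k}{\varphi_k}\otimes I_n)P_{\mathcal C}$ commute. I will in fact exhibit a full orthonormal basis $\{\ket{\varphi_k}\}_{k=1}^{3}$ of $\C^3$ for which all three associated operators commute pairwise. The distinguishing protocol is then Alice's projective measurement in $\{\ket{\varphi_k}\}$ followed by Bob's measurement in the common eigenbasis supplied by Proposition~\ref{commutecorrect}.

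First, a ricochet-type computation shows that, in the orthonormal basis $\{\ket{\phi_i}\}_{i=1}^{3}$ of $\mathcal C$, the operator $N_k$ has matrix entries $\tfrac{1}{3}\bra{\overline{\varphi}_k}B_i^*B_j\ket{\overline{\varphi}_k}$, where $\ket{\overline{\varphi}_k}$ denotes the entrywise conjugate of $\ket{\varphi_k}$ in the standard basis. Introducing the linear map $\Theta : M_3 \to M_3$ defined by $\Theta(E)_{ij}:=\tr(E\,B_i^*B_j)$, this identifies the matrix of $3N_k$ with $\Theta(\kb{\overline{\varphi}_k}{\overline{\varphi}_k})$. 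Two immediate observations follow. First, $\ker\Theta=\mathcal{X}^\perp$ in the Hilbert--Schmidt inner product, and this kernel is closed under taking adjoints because $(B_i^*B_j)^*=B_j^*B_i\in\mathcal{X}$. Second, orthonormality of the $\{\ket{\phi_i}\}$ forces $\tr(B_i^*B_j)=3\delta_{ij}$, whence $\Theta(I_3)=3I_3$ and in particular $I_3\notin\ker\Theta$.

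Since $\mathcal{X}\subsetneq M_3$ by hypothesis, $\mathcal{X}^\perp$ is nonzero and, being self-adjoint, contains a nonzero Hermitian $T$; by the previous observation $T$ is not a scalar multiple of $I_3$. Spectrally decompose $T=\sum_k\lambda_k\kb{v_k}{v_k}$ with real eigenvalues $\lambda_k$ that are not all equal, and define $\ket{\varphi_k}$ by $\ket{\overline{\varphi}_k}:=\ket{v_k}$; complex conjugation in the standard basis preserves orthonormality, so this yields an orthonormal basis of $\C^3$. Applying $\Theta$ term by term to the spectral decomposition and to $I_3$ and transporting back through the basis $\{\ket{\phi_i}\}$ gives the two operator relations $\sum_k\lambda_k N_k = 0$ and $\sum_k N_k = I_3$. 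With the $\lambda_k$ not all equal, these two linear dependences force one $N_k$ to be an affine combination $\alpha I_3+\beta N_\ell$ of another, and then the third is affine in $N_\ell$ via the sum relation; hence $N_1,N_2,N_3$ commute pairwise. Proposition~\ref{commutecorrect} then supplies the individually-correctable basis of $\mathcal C$, and the construction is explicit: diagonalize any nonzero Hermitian $T\in\mathcal{X}^\perp$, take $\{\ket{\varphi_k}\}$ to be the complex conjugates of its eigenvectors, and read off Bob's measurement from the simultaneous eigendecomposition of the resulting $N_k$.

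The hard part is not conceptual but bookkeeping: the ricochet identity must be applied with the complex-conjugation twist on Alice's side tracked carefully (it is $\ket{\overline{\varphi}_k}$, not $\ket{\varphi_k}$, that appears inside the matrix entries), and the ``affine combination'' step must be verified uniformly across all coincidence patterns of the $\lambda_k$---in particular the degenerate case where two eigenvalues coincide while the third differs, in which one of the $N_k$ collapses to a scalar multiple of $I_3$ and the remaining pair commutes because their sum is scalar.
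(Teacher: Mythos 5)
Your proof is correct and follows essentially the same route as the paper's: your map $\Theta$ is (up to a transpose and a factor of $3$) the paper's $\Psi$, your Hermitian $T\in\ker\Theta$ plays the role of the paper's matrix $M$ with $\tr (M^T B_i^*B_j)=0$, and both arguments extract commutativity of the operators $N_k$ from the two linear relations $\Theta(T)=0$ and $\Theta(I)=3I$ together with Proposition~\ref{commutecorrect}. The only minor divergence is in the last step, where you argue from ``$T$ is not a scalar multiple of $I$'' with a case analysis on coinciding eigenvalues, while the paper notes that $\tr\Psi(M)=0$ forces eigenvalues of both signs and then uses positivity of the coefficients $1-\lambda_i/\lambda_0$.
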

\begin{proof}
Suppose that $\mathcal X$ is a strict subspace of $M_3$. Then there exists a non-zero matrix $M$ with $\tr M^T B_i^*B_j =0$ for all $i,j$. We note that if such an $M$ exists, it is easily computed; and that $\Psi(M) = 0$.

We note that $\tr \Psi(M) = \frac{1}{3} \sum_{i} \tr \tau^T B_i^*B_i = 0$ implies that $M$ has both positive and negative eigenvalues. Without loss of generality, we write  $M = \sum_{i = 0}^2 \lambda_i \kb{\phi_i}{\phi_i} $  with $\lambda_0 <0 \le  \lambda_1 \le \lambda_2$.  We then use the fact that $\Psi$ is unital to write
\bee
I = \Psi(I) &=& \sum_i \Psi(  \kb{\phi_i}{\phi_i} ) \\ &=& \Psi(  \kb{\phi_1}{\phi_1} ) (1 - \frac{\lambda_1}{\lambda_0}) + \Psi(  \kb{\phi_2}{\phi_2} ) (1 - \frac{\lambda_2}{\lambda_0}).
\eee
Since $(1 - \frac{\lambda_i}{\lambda_0}) \ge 1 >0$, this implies that  $ \Psi(  \kb{\phi_2}{\phi_2} )$ and $ \Psi(  \kb{\phi_1}{\phi_1} ) $ commute, which in turn implies that if  Alice measures in the basis $\{ \ket{\phi_i} \}$, it will distinguish the basis of ${\mathcal C}$ corresponding to the common eigenbasis  of  $\{ \Psi(  \kb{\phi_i}{\phi_i} )\}$.
\end{proof}

We can see this as a way to apply  Proposition \ref{commutecorrect}  in the following example.

\begin{exa} Let ${\mathcal C}$ be the span of the following states in  $\C^3 \ot \C^3$, where $\omega$ is a primitive cube root of unity here, 
\bee
\ket{\Phi_1} &=& \frac{1}{\sqrt{3}} \left(\ket{00} + \ket{11} + \ket{22}\right) \\
\ket{\Phi_2} &=& \frac{1}{\sqrt{3}} \left(\ket{00} + \omega \ket{11} + \omega^2 \ket{22}\right) \\
\ket{\Phi_3} &=& \frac{1}{\sqrt{2}} \left( \ket{10} - \ket{01}\right).
\eee
In this case, $\dim \mathcal X = 7$, so $\Psi$ is zero on a two-dimensional space. One solution with  $\Psi(M) = 0$ is $ M = \begin{pmatrix} 0 & 0 & 1 \cr 0 & 0 & 0 \cr 1 & 0 & 0 \end{pmatrix}$, which has eigenvectors 
\[
\{ \ket{\varphi_i}\} = \{ \begin{pmatrix}1 & 0 & 1\end{pmatrix}^T ,\begin{pmatrix}1 & 0 & -1\end{pmatrix}^T ,\begin{pmatrix}0 & 1 & 0 \end{pmatrix}^T \}. 
\]
In this case, the common eigenbasis for $\Psi( \kb{\varphi_i}{\varphi_i} )$ is 
\[
\{ \begin{pmatrix}1 & \omega^2 & 0 \end{pmatrix}^T, \begin{pmatrix}1 & -\omega^2 & 0 \end{pmatrix}^T ,\begin{pmatrix}0 & 0 & 1\end{pmatrix}^T \},
\]
indicating that if Alice measures in the basis $\{ \ket{\varphi_i}\} $, we can distinguish the basis given by $\{ \ket{\Phi'_i} \}$ with
\bee
\ket{\Phi'_1} &=& \frac{1}{\sqrt{6}} \left( \ket{\Phi_1} + \omega^2 \ket{\Phi_2} \right) = \frac{1}{\sqrt{2}} \left( -\omega \ket{00} + 2\ket{11} -\omega^2\ket{22}\right) \\
\ket{\Phi'_2} &=& \frac{1}{\sqrt{6}} \left( \ket{\Phi_1} - \omega^2 \ket{\Phi_2} \right) =\sqrt{\frac{3}{2}}  \left( e^{\pi i/6}  \ket{00} + e^{-\pi i/6} \ket{22}\right) \\
\ket{\Phi'_3} &=& \ket{\Phi_3} = \frac{1}{\sqrt{2}} \left( \ket{10} - \ket{01}\right).
\eee
\end{exa}

The flexibility afforded in the example leads to the following:
\begin{cor}
Let ${\mathcal C}$ be a $3$-dimensional subspace of $\C^3 \ot C^n$ that contains two orthogonal maximally-entangled states. Then  ${\mathcal C}$ has a basis that can be distinguished with one-way LOCC.
\end{cor}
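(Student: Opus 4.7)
The plan is to apply Proposition \ref{Subspace condition for distinguishability} directly, and the main content is to verify that the hypothesis of that proposition---strict containment of $\mathcal{X}$ in $M_3$---is automatic once two of the basis elements are maximally entangled.

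First, I would choose an orthonormal basis for $\mathcal{C}$ that contains the two given orthogonal maximally-entangled states $\ket{\phi_1}, \ket{\phi_2}$, completing it with an arbitrary unit vector $\ket{\phi_3} \in \mathcal{C}$ orthogonal to both. Using the operator-vector correspondence, write $\ket{\phi_i} = (I \otimes B_i)\ket{\Phi}$ for operators $B_i : \mathbb{C}^3 \to \mathbb{C}^n$ (noting that every vector in $\mathbb{C}^3 \otimes \mathbb{C}^n$ has such a form, so $B_3$ exists without further assumption).

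Next, I would exploit the fact that a state of the form $(I \otimes B)\ket{\Phi}$ on $\mathbb{C}^3 \otimes \mathbb{C}^n$ is maximally entangled precisely when $B$ is an isometry, i.e., $B^*B = I_3$. (This is a standard computation: the reduced state on Alice's system is proportional to $(B^*B)^T$, so maximal mixing forces $B^*B = I_3$ after normalization.) Applying this to $\ket{\phi_1}$ and $\ket{\phi_2}$, we obtain
\[
B_1^* B_1 = I_3 = B_2^* B_2,
\]
which is the nontrivial linear relation $B_1^*B_1 - B_2^*B_2 = 0$ among the nine generators of
\[
\mathcal{X} = \mathrm{span}\{ B_i^* B_j \}_{i,j=1}^3 .
\]
Hence $\dim \mathcal{X} \le 8 < 9 = \dim M_3$, so $\mathcal{X}$ is a strict subspace of $M_3$.

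With the hypothesis of Proposition \ref{Subspace condition for distinguishability} verified, the conclusion of that proposition gives a basis of $\mathcal{C}$ that is distinguishable by one-way LOCC, completing the proof. There is no real obstacle here; the only place to be slightly careful is confirming that the maximally-entangled condition forces $B_i^* B_i = I_3$ (as opposed to merely being a scalar multiple of $I_3$), which is fixed by the normalization $\|\ket{\phi_i}\| = 1$ together with $\mathrm{tr}(B_i^*B_i) = 3$.
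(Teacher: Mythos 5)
Your proof is correct and follows essentially the same route as the paper: both arguments observe that the two maximally-entangled basis elements force $B_1^*B_1 = B_2^*B_2 = I_3$, giving a nontrivial linear relation that makes $\mathcal{X}$ a strict subspace of $M_3$, and then invoke Proposition~\ref{Subspace condition for distinguishability}. Your extra care about normalization (ruling out a mere scalar multiple of $I_3$) is a welcome detail but does not change the argument.
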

The corollary follows from the fact that if $B_1$ and $B_2$ are both maximally entangled, then $B_1^*B_1 = B_2^*B_2 = I$. This means $\dim \mathcal X < 9$, allowing us to apply Proposition \ref{Subspace condition for distinguishability}.

\subsection{The Case of Matrix Algebras} To say that  the  operators in Proposition \ref{commutecorrect} commute is to say that they are contained in a commutative algebra of simultaneously diagonalizable matrices. What happens if they lie in a more general algebra? We can generalize the above results somewhat.

Notationally below, we write $M_d$ for the algebra of $d \times d$ complex matrices.

\begin{prop}\label{prop:mixed states}
Let ${\mathcal S} = \{\rho_1, \ldots, \rho_n\}$ be orthogonal (possibly mixed) states in a finite-dimensional Hilbert space $\mathcal H$, and let $\mathcal C$ be the orthogonal direct sum of their supports, so that $\dim \mathcal C = \sum_{i=1}^n d_i$ with $d_i = \mathrm{rank}(\rho_i)$. Let  $\mathcal E$ be a channel on $\mathcal H$.

Suppose there exists a  correction operation $\mathcal R$ such that $(\mathcal R \circ {\mathcal E})(\rho_i) = \rho_i$  for all $i$. Then for any Kraus representation of $\mathcal E$ as ${\mathcal E}(\rho) = \sum_i A_i\rho A_i^*$, the set of matrices $\{ P_{\mathcal C} A_j^*A_i  P_{\mathcal C}\}_{i,j}$ are contained in an algebra $\mathcal A$ that is unitarily equivalent to an orthogonal algebra direct sum $\mathcal A \cong \oplus_i M_{d_i}$.

Conversely, if the set $\{ P_{\mathcal C} A_j^*A_i  P_{\mathcal C}\}_{i,j}$ is contained in an algebra $\mathcal A \cong \oplus_i {\mathcal M}_{d_i}$, then we can find subspaces ${\mathcal C}_i$ such that $\dim {\mathcal C}_i = d_i$; $\mathcal{C} = \oplus_i {\mathcal C}_i$; and the set of states ${\mathcal S} = \{\rho_1, \ldots, \rho_n\}$ can be corrected whenever the support of $\rho_i$ is contained in $\mathcal{C}_i$.
\end{prop}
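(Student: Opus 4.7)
For the forward direction, I would lean on trace-norm contractivity of the CPTP map $\mathcal{R}$: for every Hermitian $X$, $\|\mathcal{R}(X)\|_1 \le \|X\|_1$. Applied to $X = \mathcal{E}(\rho_i) - \mathcal{E}(\rho_j)$, together with the orthogonality of distinct $\rho_i, \rho_j$, this gives
\[
2 = \|\rho_i - \rho_j\|_1 = \|\mathcal{R}(\mathcal{E}(\rho_i) - \mathcal{E}(\rho_j))\|_1 \le \|\mathcal{E}(\rho_i) - \mathcal{E}(\rho_j)\|_1 \le 2,
\]
so equality forces $\mathcal{E}(\rho_i)$ and $\mathcal{E}(\rho_j)$ to have orthogonal supports. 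Using the standard identity $\mathrm{supp}(A\rho A^*) = A(\mathrm{supp}(\rho))$ for $\rho \ge 0$, we get $\mathrm{supp}(\mathcal{E}(\rho_i)) = \sum_j A_j \mathcal{C}_i$, and orthogonality across $i \ne i'$ is equivalent to $A_j \mathcal{C}_i \perp A_k \mathcal{C}_{i'}$ for all $j, k$, i.e.\ $P_i A_j^* A_k P_{i'} = 0$. Summing over $i, i'$ then shows each $P_\mathcal{C} A_j^* A_k P_\mathcal{C}$ is block-diagonal with respect to $\mathcal{C} = \bigoplus_i \mathcal{C}_i$ and hence lies in the algebra $\mathcal{A} := \bigoplus_i \mathcal{L}(\mathcal{C}_i) \cong \bigoplus_i M_{d_i}$.

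For the converse, I would first observe that any $*$-algebra $\mathcal{A} \subseteq \mathcal{L}(\mathcal{C})$ isomorphic to $\bigoplus_i M_{d_i}$ with $\sum_i d_i = \dim \mathcal{C}$ must act multiplicity-freely, so by Wedderburn there is an orthogonal decomposition $\mathcal{C} = \bigoplus_i \mathcal{C}_i$ with $\dim \mathcal{C}_i = d_i$ and $\mathcal{A} = \bigoplus_i \mathcal{L}(\mathcal{C}_i)$. The hypothesis then yields $P_i A_j^* A_k P_{i'} = 0$ for $i \ne i'$, so the subspaces $\mathcal{V}_i := \sum_j A_j \mathcal{C}_i$ are mutually orthogonal. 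For any orthogonal states $\rho_i$ with $\mathrm{supp}(\rho_i) \subseteq \mathcal{C}_i$, $\mathcal{E}(\rho_i)$ is supported in $\mathcal{V}_i$; letting $Q_i$ denote the projection onto $\mathcal{V}_i$ and $\tau$ any fixed state, the measure-and-prepare channel
\[
\mathcal{R}(\sigma) = \sum_i \rho_i \, \tr(Q_i \sigma) + \tau \, \tr\!\left( \left(I - \sum_i Q_i\right)\sigma \right)
\]
is manifestly CPTP and satisfies $(\mathcal{R} \circ \mathcal{E})(\rho_i) = \rho_i$ for every $i$.

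The main technical step is the forward one: converting perfect correctability of the mixed states $\rho_i$ into the block-orthogonality $P_i A_j^* A_k P_{i'} = 0$. The trace-norm route is the cleanest path and sidesteps the machinery of fixed-point algebras of CPTP maps, which would otherwise need to be invoked since the $\rho_i$ are not full rank on $\mathcal{H}$. Once the block structure is in place, identifying the algebra $\mathcal{A}$ and constructing the recovery are routine Wedderburn-and-measurement arguments, paralleling the pattern already used in the proof of Proposition~\ref{commutecorrect}.
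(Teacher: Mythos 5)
Your proof is correct and follows essentially the same route as the paper's: establish that correctability forces the outputs $\mathcal{E}(\rho_i)$ to be mutually orthogonal, deduce $P_i A_j^* A_k P_{i'} = 0$ for $i \ne i'$, and read off block-diagonality of the $P_{\mathcal C}A_j^*A_kP_{\mathcal C}$ with respect to $\mathcal{C} = \oplus_i \mathcal{C}_i$. The differences are only in the level of detail: you justify orthogonality of the $\mathcal{E}(\rho_i)$ via trace-norm contractivity where the paper simply asserts it, you pass through supports rather than expanding $\tr(\mathcal{E}(\rho_i)\mathcal{E}(\rho_j)) = \sum_{k,l}\left\vert \bra{\phi_i}A_k^*A_l\ket{\phi_j}\right\vert^2 = 0$ term by term, and you write out the measure-and-prepare recovery explicitly where the paper says the converse follows by ``reversing the calculations.''
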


\begin{proof}
Suppose that the map ${\mathcal E}$ is correctable on  the states $\{\rho_i\}$. Then the states $\{ {\mathcal E}(\rho_i)\}$ are mutually orthogonal. For each $i$, let ${\mathcal C}_i$ be the support of $\rho_i$, and so the algebra of operators supported on $\mathcal C_i$ is unitarily equivalent to $M_{d_i}$. Since $\tr (\rho_i\rho_j) = 0$, we have ${\mathcal C}_i \cap {\mathcal C}_j = \{0\}$ whenever $i \ne j$.  If ${\mathcal C}$ is the support of the whole set of $\rho_i$, then
\bee
{\mathcal C} = \oplus_{i = 1}^n {\mathcal C}_i .
\eee
Writing $P_{\mathcal C}$ as the projection on ${\mathcal C}$, perfect correctability means that $\tr ({\mathcal E}(\rho_i){\mathcal E}(\rho_j)) = 0$, which implies that for $\ket{\phi_i} \in {\mathcal C}_i$ and $\ket{\phi_j} \in {\mathcal C}_j$,
\bee
0 & = & \tr ({\mathcal E}(\kb{\phi_i}{\phi_i}) {\mathcal E}(\kb{\phi_j}{\phi_j}))  \\
& = & \sum_{k,l} \tr ( A_k\kb{\phi_i}{\phi_i}A_k^*A_l\kb{\phi_j}{\phi_j}A_l^* ) \\
& = & \sum_{k,l} \left \vert \bra{\phi_i}A_k^*A_l\ket{\phi_j} \right\vert^2 \eee
whenever $i \ne j$. This implies that each term in the above sum is equal to zero for any Kraus representation of $\mathcal E$.  It follows that the matrix representations for all the operators $P_{\mathcal C} A_k^*A_l  P_{\mathcal C}$ are block diagonal with respect to the decomposition ${\mathcal C} = \oplus_{i = 1}^n {\mathcal C}_i$.

The converse can be proved straightforwardly by reversing the above calculations.
\end{proof}

We may apply this result to one-way LOCC. First we establish the following identity.

\begin{lem}
Let $\H = \C^d \ot \C^d$ and let  $T = \left(\kb{\Phi}{\Phi}\right)^{PT}$ be the partial transpose of a standard maximally-entangled state on $\H$. Then
for any  state $\sigma$ on $\C^d$ we have,
\bee
d^2T\left( \sigma \otimes I \right) T =  I \otimes \sigma .
\eee
\end{lem}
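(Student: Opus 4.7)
The plan is to reduce the identity to a simple statement about the SWAP operator by computing $T$ explicitly. First I would write out $\kb{\Phi}{\Phi} = \frac{1}{d}\sum_{i,j} \kb{ii}{jj}$ and take the partial transpose on one of the two tensor factors. Using $(\kb{i}{j})^T = \kb{j}{i}$, this gives
\[
T = \frac{1}{d}\sum_{i,j} \kb{i}{j} \otimes \kb{j}{i} = \frac{1}{d} S,
\]
where $S$ is the SWAP operator on $\C^d \otimes \C^d$. (It is worth remarking that $\kb{\Phi}{\Phi}$ is symmetric under full transpose, so the choice of subsystem for the partial transpose is immaterial.)

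Once $T = \frac{1}{d}S$ has been established, the identity becomes essentially tautological. Substituting,
\[
d^2\, T(\sigma \otimes I) T = d^2 \cdot \frac{1}{d^2}\, S(\sigma \otimes I) S = S(\sigma \otimes I) S,
\]
and then I would invoke the standard property $S(A \otimes B)S = B \otimes A$ with $A = \sigma$, $B = I$ to conclude $d^2 T(\sigma \otimes I) T = I \otimes \sigma$, as required.

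There is no real obstacle here; the whole argument is a two-line calculation once one notices that the partial transpose of the maximally entangled projector is proportional to SWAP. If one preferred to avoid invoking the SWAP property as a black box, a direct verification by expanding $T(\sigma \otimes I)T$ in the computational basis and matching matrix entries would work just as well and also fits within a few lines.
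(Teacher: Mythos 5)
Your proof is correct, and it is essentially the paper's argument in slightly repackaged form: the paper carries out exactly the direct computational-basis expansion of $d^2T(\sigma\otimes I)T$ that you mention as the alternative, which amounts to verifying the SWAP conjugation identity $S(A\otimes B)S = B\otimes A$ inline rather than quoting it. Identifying $T=\tfrac{1}{d}S$ up front is a clean way to organize the same calculation, so nothing further is needed.
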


\begin{proof}
By direct calculation we have:
\bee
d^2T\left( \sigma \otimes I \right) T &=& \sum_{i,j,k,l} \kb{i}{j}\sigma\kb{k}{l} \otimes \kb{j}{i} I \kb{l}{k} \\
& = & \sum_{i,j,k} \kb{i}{i} \otimes \kb{j}{j}\sigma\kb{k}{k} = I \otimes \sigma .
\eee
\end{proof}

\begin{exa}
Let $\H = \C^d \ot \C^d$. We can write the space of operators on $\H$ as a direct sum of its symmetric and antisymmetric parts, writing $P_{\mathcal C}  = I_\H = \Pi_s \oplus \Pi_a$, where $\Pi_s$ and $\Pi_a$ project onto the symmetric and antisymmetric subspaces. It is well known that the states $\rho_1 = \frac{2}{d(d+1)} \Pi_s$ and $\rho_2 = \frac{2}{d(d-1)} \Pi_a$ cannot be distinguished even with PPT measurements \cite{eggeling2002hiding,matthews2009distinguishability}.

Let $\rho_1$ and $\rho_2$ be symmetric and antisymmetric states, respectively, that are full rank. That is, $\Pi_s\rho_1=\rho_1; \Pi_a\rho_2 = \rho_2$; rank $\rho_1 = \frac{d(d+1)}{2}$; and rank $\rho_2 = \frac{d(d-1)}{2}$. Thus, $\Pi_s$ and $\Pi_a$ are projections onto the support of $\rho_1$ and $\rho_2$.  As an application of Proposition \ref{prop:mixed states}, we can show show that these states cannot be distinguished with one-way LOCC.

As above, we write $T = \left(\kb{\Phi}{\Phi}\right)^{PT}$, implying that $\Pi_s = \frac{I + dT}{2}$ and $\Pi_a = \frac{I - dT}{2}$. We represent Alice's measurement as a channel ${\mathcal E}$ with Kraus operators  $\{A_j = \sigma_j \otimes I \}$. A necessary condition for one-way distinguishability of $\rho_1$ and $\rho_2$ is that $P_{\mathcal C} A_j^*A_jP_{\mathcal C} = \Pi_s A_j^*A_j \Pi_s \oplus  \Pi_a A_j^*A_j \Pi_a$ is block diagonal, implying that the sum of the off-diagonal entries is zero.
\bee
0 & = &  \Pi_s \left(  \sigma_j  \otimes I \right) \Pi_a + \Pi_a \left(   \sigma_j  \otimes I \right) \Pi_s \\
 & = &(I + dT)\left( \sigma_j \otimes I \right)(I -dT) + (I - dT)\left( \sigma_j \otimes I \right)(I + dT) \\
 & = &   \left( \sigma_j \otimes I \right) + d^2T \left( \sigma_j  \otimes I \right)T \\
& = &   \sigma_j  \otimes I + I \otimes \sigma_j
\eee
This necessary condition is clearly not met for any  state $\sigma$; hence one-way LOCC is not possible.
\end{exa}





\begin{rmk}\label{oaqecremark}
{\rm As noted above, the error correction results in this section, though obtained with applications to LOCC as primary motivation here, may nevertheless also be viewed within the OAQEC framework as distinguished special cases.  Indeed, from the operator algebra perspective Proposition~\ref{commutecorrect}  may be viewed as the case in which the correctable algebras are commutative (finite-dimensional) von Neumann algebras. Further, Proposition~\ref{prop:mixed states} corresponds to the case in which the correctable algebras are direct sums of full (and unampliated, or untensored) matrix algebras. The papers  \cite{bkk07,bkk07b,bkk09}, including the references therein and forward references available online, give entrance points into the related error correction literature. }
\end{rmk}

\section{Distinguishable Sets of States from the Stabilizer Formalism}\label{QEC Formalism}

In \cite{kribs2016operator} we began to develop new techniques for distinguishing sets of quantum states via one-way LOCC, based on the theory of operator algebras and aspects of matrix theory, taking motivation from mathematical conditions for one-way LOCC \cite{N13}. In the spirit of the current investigation, here we expand this approach to sets of states that arise in the central quantum error correction setting of the stablizer formalism \cite{gottesmanstab,nielsen}. Meshing with our notation above, suppose $\{B_i\}$ are operators on $\mathbb{C}^d$ and $\ket{\Phi}\in\mathbb{C}^d\otimes\mathbb{C}^d$ is a fixed maximally entangled state.

We begin by making a small but significant improvement to a key result of \cite{kribs2016operator}. We recall that an {\it operator system} is a self-adjoint subspace of operators containing the identity, and a {\it separating vector} for an operator algebra $\mathfrak{A}$ is a vector $\ket{\psi}$ such that $A\ket{\psi} = 0$ with $A\in\mathfrak{A}$ only if $A=0$.  We also recall from the theory of finite-dimensional C$^*$-algebras \cite{davidson}, that every such algebra is unitarily equivalent to an algebra $\mathfrak{A}$ for which there exist positive integers $m_k$, $n_k$ such that $\mathfrak{A} =  \oplus_k (M_{m_k}\otimes I_{n_k})$.

Consider the operator system $\mathfrak{S}_0 = \mathrm{span}\,\{B_i^* B_j, I \}_{i \neq j}$. If $\mathcal S$ is distinguishable by one-way LOCC, then $\mathfrak{S}_0$ is contained in an operator system of the specific form investigated in Section~4 of \cite{kribs2016operator}, and hence by Theorem~1 in that paper, every subalgebra of $\mathfrak{S}_0$ has a separating vector.  In fact, we have the following:

\begin{thm}\label{thm2}
Suppose the operator system $\mathfrak{S}_0 = \mathrm{span}\,\{B_i^* B_j, I \}_{i \neq j}$ is closed under multiplication and hence is a C$^*$-algebra. Then $\mathcal S = \{(I\otimes B_i)\ket{\Phi}\}$ is distinguishable by one-way LOCC if and only if  $\mathfrak{S}_0$ has a separating vector.
\end{thm}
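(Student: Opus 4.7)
The forward direction is already handled by the discussion preceding the theorem: Theorem~1 of \cite{kribs2016operator} forces every subalgebra of $\mathfrak{S}_0$ (and hence $\mathfrak{S}_0$ itself, which is an algebra by hypothesis) to admit a separating vector whenever $\mathcal S$ is distinguishable by one-way LOCC.

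My plan for the converse is two-stage. First, I will recast distinguishability as a concrete condition on Alice's POVM. Using the teleportation identity $(M\otimes I)\ket{\Phi}=(I\otimes M^T)\ket{\Phi}$, one checks that if Alice applies the rank-one POVM $\{p_j\ket{\varphi_j}\bra{\varphi_j}\}_j$ and obtains outcome $j$, Bob is left with the unnormalized pure state $B_i\ket{\overline{\varphi}_j}$, so perfect distinguishability amounts to $\bra{\overline{\varphi}_j}B_i^*B_k\ket{\overline{\varphi}_j}=0$ for all $i\ne k$ and every $j$. In algebraic language, each element $F_j:=p_j\ket{\overline{\varphi}_j}\bra{\overline{\varphi}_j}$ must be trace-orthogonal to $\mathfrak{S}_0\ominus\mathbb{C}I=\mathrm{span}\{B_i^*B_k\}_{i\ne k}$, which---writing $\Lambda:M_d\to\mathfrak{S}_0$ for the trace-preserving conditional expectation---is equivalent to $\Lambda(F_j)\in\mathbb{C}I$ for every $j$. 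So the problem reduces to exhibiting a rank-one POVM $\{F_j\}$ on $\mathbb{C}^d$ with $\sum_jF_j=I_d$ and $\Lambda(F_j)\in\mathbb{C}I$ for each $j$.

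For the second stage I will apply the finite-dimensional structure theorem to write $\mathfrak{S}_0\cong\bigoplus_k(M_{m_k}\otimes I_{n_k})$ on $\bigoplus_k(\mathbb{C}^{m_k}\otimes\mathbb{C}^{n_k})$, where the separating-vector hypothesis is exactly $m_k\le n_k$ in every block. In this decomposition $\Lambda$ is block-diagonal with $\Lambda(X)|_k=\frac{1}{n_k}\mathrm{tr}_2(X_{kk})\otimes I_{n_k}$, so a nonzero rank-one $\ket{\xi}\bra{\xi}$ has $\Lambda(\ket{\xi}\bra{\xi})\in\mathbb{C}I$ iff each block component $\ket{\xi_k}$ has first-factor reduced density proportional to $I_{m_k}$---solvable precisely because $m_k\le n_k$, by taking $\ket{\xi_k}$ maximally entangled along an isometric embedding $V_k:\mathbb{C}^{m_k}\hookrightarrow\mathbb{C}^{n_k}$. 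To assemble such vectors into a POVM summing to $I_d$ I will twirl over a product of Heisenberg--Weyl groups together with a character average that separates the blocks: for each $g=(g_1,\dots,g_K)\in\prod_k\mathrm{HW}(\mathbb{C}^{n_k})$ and each character $\chi=(\chi_1,\dots,\chi_K)$, set
\[
\ket{w_{g,\chi}}=\sum_k\sqrt{\frac{m_kn_k}{d}}\,\chi_k\,(I\otimes g_k)\ket{v^{(k)}},\qquad \ket{v^{(k)}}=\frac{1}{\sqrt{m_k}}\sum_i\ket{i}\otimes V_k\ket{i}.
\]
The weights $m_kn_k/d$ are pinned by requiring block-uniform $\Lambda$-image; the Heisenberg--Weyl twirl identity $\sum_{g_k}g_kXg_k^*=n_k\,\mathrm{tr}(X)\,I_{n_k}$ averages each within-block contribution to a multiple of $I_{m_kn_k}$, while the character sum kills all cross-block off-diagonal terms via orthogonality. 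A scalar normalization then yields a rank-one POVM $\{F_{g,\chi}\}$ satisfying the condition of the first stage, finishing the argument.

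The main obstacle is the second stage: a naive direct sum of block-wise POVMs will not work, since the resulting $\Lambda$-images sit inside single blocks rather than being scalar multiples of the full $I_d$. The essential trick is to produce POVM elements that are simultaneously block-flat across all blocks, and the coupled Heisenberg--Weyl-plus-character-averaging construction achieves exactly the required cross-block cancellations; the remaining verification that the sum telescopes to $I_d$ is a routine scalar match.
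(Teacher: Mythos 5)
Your proof is correct, but the converse takes a genuinely different route from the paper's. The paper disposes of the converse in two lines: by Lemma~\ref{thesisthm}, a separating vector yields a unitary $U$ such that $U\mathfrak{S}_0U^*$ has constant diagonal, and since $\tr(B_i^*B_j)=0$ for $i\neq j$ (orthogonality of $\mathcal S$) this forces every diagonal entry of $UB_i^*B_jU^*$ to vanish, which is exactly the one-way LOCC condition of \cite{N13}; the hard implication of that lemma is deferred to \cite{pereira2}. You instead keep only the elementary part of the lemma (separating vector iff $m_k\le n_k$ in the decomposition $\oplus_k(M_{m_k}\otimes I_{n_k})$, which the paper does prove in-text) and replace the deferred implication $(3)\Rightarrow(1)$ with an explicit construction: a rank-one POVM whose elements have scalar image under the trace-preserving conditional expectation $\Lambda$ onto $\mathfrak{S}_0$, built from maximally entangled vectors along isometries $V_k:\C^{m_k}\hookrightarrow\C^{n_k}$ and symmetrized by a Heisenberg--Weyl twirl on the second tensor factors together with a character average over the blocks. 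I checked the arithmetic: each block of $\sum_{g,\chi}\kb{w_{g,\chi}}{w_{g,\chi}}$ acquires the coefficient $|\widehat{G}|\big(\prod_j n_j^2\big)/d$ independently of $k$, so the elements sum to a multiple of $I_d$, and $\Lambda(\kb{w_{g,\chi}}{w_{g,\chi}})=\tfrac1d I$ in every block, so the construction goes through. What your approach buys is self-containedness (no appeal to the unproved direction of Lemma~\ref{thesisthm}) plus an explicit measurement for Alice; what it costs is that your POVM has $|\widehat{G}|\prod_k n_k^2$ outcomes rather than the $d$-outcome von Neumann measurement that the constant-diagonal unitary provides, and your first-stage identification of $\mathrm{span}\{B_i^*B_k\}_{i\ne k}$ with the traceless part of $\mathfrak{S}_0$ tacitly uses $\tr(B_i^*B_k)=0$ for $i\ne k$ --- true here by orthogonality of the states, but worth stating explicitly (only the sufficiency direction of that identification is actually needed for the converse).
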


Before proving this improvement we need the following result from \cite{pereira2}.

\begin{lem} \label{thesisthm}  \cite[Theorem 3.2.4]{pereira2} Let $\mathfrak{A}=\oplus_{k} (M_{m_k}\otimes I_{n_k})$. Then the following are equivalent.

\begin{enumerate}

\item There exists a $r \times r$ unitary matrix $U$ where $r=\sum_k m_k n_k$ such that $U\mathfrak{A} U^*$  is a C$^*$-algebra of matrices with constant diagonal.

\item The algebra $\mathfrak{A}$ has a separating vector.

\item For all $k$, $n_k\geq m_k$.

\end{enumerate}

 \end{lem}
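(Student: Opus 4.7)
My plan is to prove the three conditions equivalent cyclically, $(1)\Rightarrow(2)\Rightarrow(3)\Rightarrow(1)$. For $(1)\Rightarrow(2)$, I would show that $\ket{\psi} := U^*\ket{e_1}$ is a separating vector for $\mathfrak{A}$: if $A\in\mathfrak{A}$ annihilates $\ket{\psi}$, then $B := UAU^*\in U\mathfrak{A}U^*$ annihilates $\ket{e_1}$, so $(B^*B)_{11} = \|B\ket{e_1}\|^2 = 0$. Since $U\mathfrak{A}U^*$ is a C$^*$-algebra of constant-diagonal matrices, $B^*B$ has constant diagonal equal to zero, hence trace zero; being positive semidefinite it must vanish, forcing $A = 0$.

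For $(2)\Rightarrow(3)$, decompose a separating vector as $\ket{\psi} = \oplus_k \ket{\psi_k}$ with $\ket{\psi_k}\in\mathbb{C}^{m_k}\otimes\mathbb{C}^{n_k}$ along the direct-sum structure. Because the summands of $\mathfrak{A}$ act independently, each $\ket{\psi_k}$ must itself be separating for $M_{m_k}\otimes I_{n_k}$. Under the standard vectorization $\ket{\psi_k}\leftrightarrow M_k\in M_{m_k\times n_k}$, the action of $A\otimes I_{n_k}$ is left multiplication by $A$, so the separating condition becomes $AM_k = 0\Rightarrow A=0$, which is equivalent to $M_k$ having row rank $m_k$. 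Since $\mathrm{rank}(M_k)\leq\min(m_k,n_k)$, we conclude $n_k\geq m_k$.

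For $(3)\Rightarrow(1)$, I would build $U = \oplus_k U_k$ block-diagonally, reducing to constructing, for each summand $M_m\otimes I_n$ with $n\geq m$ (subscripts dropped), a unitary $U_k$ on $\mathbb{C}^m\otimes\mathbb{C}^n$ such that every $U_k(A\otimes I_n)U_k^*$ has constant diagonal. Trace-matching forces the constant to equal $\tr(A)/m$, and this requirement is equivalent to the existence of an orthonormal basis $\{\ket{v_\alpha}\}$ of $\mathbb{C}^m\otimes\mathbb{C}^n$ satisfying $\tr_{\mathbb{C}^n}\kb{v_\alpha}{v_\alpha}=I_m/m$ for every $\alpha$, since then $\bra{v_\alpha}(A\otimes I_n)\ket{v_\alpha} = \tr(A)/m$ independently of $\alpha$.

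The main obstacle is producing this maximally-mixed-reduction basis, and the hypothesis $n\geq m$ is exactly what makes it possible. One explicit candidate is
\[
\ket{v_{i,j}} = \tfrac{1}{\sqrt{m}}\sum_{k=0}^{m-1}\omega^{ik}\ket{k}\otimes\ket{(j+k)\bmod n}, \qquad i\in\{0,\dots,m-1\},\ j\in\{0,\dots,n-1\},
\]
with $\omega = e^{2\pi i/m}$; orthonormality reduces to the standard discrete Fourier identity $\sum_{k=0}^{m-1}\omega^{(i'-i)k} = m\,\delta_{i,i'}$, while the maximally mixed reduction follows from the fact that distinct $k,k'\in\{0,\dots,m-1\}$ have distinct residues modulo $n$ precisely when $n\geq m$. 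Defining $U_k$ to carry the computational basis of $\mathbb{C}^{mn}$ onto $\{\ket{v_{i,j}}\}$ and assembling the blocks then produces the desired $U$, completing the cycle.
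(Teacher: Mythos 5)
Your arguments for $(1)\Rightarrow(2)$ and $(2)\Rightarrow(3)$ are correct; the first is essentially the paper's own argument (the $(1,1)$ entry of $UA^*AU^*$ equals $\tfrac{1}{r}\tr(A^*A)$ because the diagonal is constant), and the second replaces the paper's dimension count $m_k^2\le m_kn_k$ with the equivalent vectorization/rank argument, which is fine. The paper deliberately omits $(3)\Rightarrow(1)$, calling it intricate, and that is exactly where your proposal breaks down.

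The gap is in the assembly $U=\oplus_k U_k$. If $U$ is block-diagonal, then for $A=\oplus_k(A_k\otimes I_{n_k})$ the $k$-th diagonal block of $UAU^*$ has constant diagonal $\tr(A_k)/m_k$, but these constants differ from block to block, so $UAU^*$ does \emph{not} have constant diagonal as a matrix in $M_r$ unless all the $\tr(A_k)/m_k$ coincide. The simplest counterexample is $\mathfrak{A}=\C\oplus\C\subset M_2$ (all $m_k=n_k=1$, so condition $(3)$ holds): any block-diagonal, i.e.\ diagonal, unitary fixes $\mathrm{diag}(a,b)$, which has constant diagonal only when $a=b$; one genuinely needs a block-mixing unitary such as $\tfrac{1}{\sqrt2}\bigl(\begin{smallmatrix}1&1\\1&-1\end{smallmatrix}\bigr)$. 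Note that the paper's implication $(1)\Rightarrow(2)$ forces the constant to be $\tr(A)/r$ for \emph{every} $A\in\mathfrak{A}$, so each basis vector $\ket{v_\alpha}=U^*\ket{e_\alpha}$ must satisfy $\bra{v_\alpha}A\ket{v_\alpha}=\tfrac1r\sum_k n_k\tr(A_k)$ for all $A$; since the $A_k$ vary independently, this pins down the component of $\ket{v_\alpha}$ in the $k$-th block to have squared norm $m_kn_k/r$ and maximally mixed reduced state on the $\C^{m_k}$ factor. No vector supported on a single block can do this when there are at least two summands. Your single-block construction (the clock-and-shift basis with maximally mixed marginals, valid precisely when $n\ge m$) is correct and nicely generalizes the paper's remark about Pauli states for $M_m\otimes I_m$, but to complete $(3)\Rightarrow(1)$ you must interleave these bases across blocks with the prescribed weights $m_kn_k/r$ while preserving orthonormality of all $r$ vectors -- this is the genuinely intricate step that the cited thesis carries out and that your proposal is missing.
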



We shall describe the proofs of two implications in this result as they are short and indicate how the notions are connected. The proof of the remaining implication is more involved.

For the implication $(1)\implies (2)$:  If $U\mathfrak{A} U^*$  is a C$^*$-algebra of matrices in a basis $\{ \ket{0},\ket{1},\ldots\}$ with constant diagonal and $A\in \mathfrak{A}$ is non-zero, then $\bra{0} UA^*AU^* \ket{0}=\frac{1}{r}\tr(AA^*)>0$ and hence $AU^*\ket{0} \neq 0$ so $U^*\ket{0}$ is a separating vector for $\mathfrak{A}$.

For the implication $(2) \implies (3)$: Suppose $\mathfrak{A}$ has a separating vector.  Since $\mathfrak{A}=\oplus_{k} (M_{m_k}\otimes I_{n_k})$, each component $M_{m_k}\otimes I_{n_k}$ must have a separating vector in $\mathbb{C}^{m_kn_k}$, call this separating vector $\ket{\psi_k}$.  Then the map $M_{m_k}\otimes I_{n_k} \to \mathbb{C}^{m_kn_k}$ defined by mapping $A\in M_{m_k}\otimes I_{n_k}$ to $A\ket{\psi_k}$ must be an injective linear map. Therefore, $m_k^2=\dim(M_{m_k}\otimes I_{n_k})\le \dim(\mathbb{C}^{m_kn_k})=m_kn_k$ and hence $m_k\le n_k$ for all $k$.

The full proof that $(3) \implies (1)$  is rather intricate and can be found in its entirety in \cite{pereira2} (along with a stronger version of Lemma \ref{thesisthm}). We omit the proof here, aside from pointing out that in the simple case  $\mathfrak{A} = M_m \ot I_m$, the algebra  $U\mathfrak{A} U^*$  has constant diagonal if the rows of $U$ are the  $m$-dimensional Pauli states.
\strut

We now prove Theorem \ref{thm2}.

\begin{proof}
Suppose $\mathfrak{S}_0$ has a separating vector. Then by Lemma~\ref{thesisthm}, there is a unitary $U$ such that $U\mathfrak{S}_0 U^*$ is a C$^*$-algebra of matrices with constant diagonal. Hence we have the condition from \cite{N13} (also derived in \cite{kribs2016operator}) satisfied, and $\mathcal S$ is one-way distinguishable. The converse follows from the discussion preceding the theorem statement.
\end{proof}

\begin{rmk}
{\rm We know there exist positive integers $m_k$, $n_k$ such that the unitary equivalence $\mathfrak{A} = \mathrm{Alg}\,(\mathfrak{S}_0) \cong \oplus_k (M_{m_k}\otimes I_{n_k})$ holds, and that by Theorem~\ref{thesisthm}, $\mathfrak{A}$ has a separating vector if and only if $n_k\geq m_k$ for all $k$. Thus, as a road map to examples of sets of indistinguishable states, we can look for sets $\{B_i\}$ such that $\mathfrak{S}_0 = \mathfrak{A}$ and $m_k > n_k$ for some $k$.

So we are led to consider sets of unitaries $\{B_i\}$, such that the set is closed under multiplication, taking adjoints, and taking inverses (up to scalar multiples). We also recall that the states $\mathcal S = \{ (I\otimes B_i)\ket{\Phi}\}$ are all maximally entangled precisely when each $B_i$ is unitary. }
\end{rmk}

Before continuing, we show how Theorem \ref{thm2} gives an alternate proof of a result from \cite{kribs2016operator}.  We first need the following definition.

\begin{defn}
We say that a set of states $\{(I\otimes B_k)\ket{\Phi}\}_{k=1}^n$ have a simultaneous Schmidt decomposition if there exists two unitary matrices $U$ and $V$ and $n$ complex diagonal matrices $D_k$ such that  for each $k$, $B_k=UD_kV$.

\end{defn}

\begin{cor} \cite[Proposition 3]{kribs2016operator} Any set of orthonormal states $\{(I\otimes B_k)\ket{\Phi}\}_{k=1}^n$ which have a simultaneous Schmidt decomposition are distinguishable by one-way LOCC. \end{cor}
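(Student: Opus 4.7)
The plan is to apply the constant-diagonal machinery behind Theorem~\ref{thm2} (via Lemma~\ref{thesisthm}) by using the fact that a simultaneous Schmidt decomposition forces all the relevant products $B_i^*B_j$ into a commutative algebra, where the hypothesis $m_k \le n_k$ of Lemma~\ref{thesisthm} is automatic.

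First I would substitute $B_k = U D_k V$ into the defining products of $\mathfrak{S}_0$ to obtain
\[
B_i^* B_j \;=\; V^* D_i^* U^* U D_j V \;=\; V^*(D_i^* D_j) V .
\]
Since each $D_i^* D_j$ is diagonal, every element of $\mathfrak{S}_0$ lies in the commutative C$^*$-algebra $V^* \mathcal{D}_d V$, where $\mathcal{D}_d$ denotes the diagonal matrices. Consequently $\mathfrak{A} := \mathrm{Alg}(\mathfrak{S}_0)$ is itself abelian, and in the notation of the remark following Theorem~\ref{thm2}, its decomposition $\mathfrak{A} \cong \oplus_k (M_{m_k} \otimes I_{n_k})$ must have $m_k = 1$ for every $k$.

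Next, because $m_k = 1 \le n_k$ holds trivially, Lemma~\ref{thesisthm} supplies a unitary $W$ for which every matrix in $W\mathfrak{A} W^*$ has constant diagonal; the inclusion $\mathfrak{S}_0 \subseteq \mathfrak{A}$ then passes this property to $W\mathfrak{S}_0 W^*$. For $i \ne j$, orthonormality of $\mathcal S$ yields
\[
\mathrm{tr}(B_i^* B_j) \;=\; d \, \bk{\phi_i}{\phi_j} \;=\; 0 ,
\]
forcing the constant diagonal of $W B_i^* B_j W^*$ to vanish. Setting $\ket{\bar{\varphi}_k} = W^* \ket{k}$ therefore gives $\bra{\bar{\varphi}_k} B_i^* B_j \ket{\bar{\varphi}_k} = 0$ for all $k$ and all $i \ne j$, which is exactly the one-way LOCC criterion of \cite{N13} invoked in the proof of Theorem~\ref{thm2}: Alice measures in the basis $\{\kb{\varphi_k}{\varphi_k}\}$, and the post-measurement states on Bob's side are mutually orthogonal, so he completes the discrimination.

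The only subtle point is that $\mathfrak{S}_0$ itself need not be closed under multiplication in the simultaneous Schmidt setting (for example, on $\C^4$ with $B_1 = I$, $B_2 = \mathrm{diag}(1,-1,1,-1)$, $B_3 = \mathrm{diag}(1,1,-1,-1)$, the product $B_2 B_3 = \mathrm{diag}(1,-1,-1,1)$ escapes $\mathrm{span}\{I, B_2, B_3\}$), so Theorem~\ref{thm2} cannot be applied verbatim to $\mathfrak{S}_0$. I bypass this by invoking Lemma~\ref{thesisthm} on the generated algebra $\mathfrak{A}$ and transferring the constant-diagonal conclusion back down to $\mathfrak{S}_0$ via inclusion; once this substitution is made, the remaining steps are routine verifications.
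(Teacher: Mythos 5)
Your argument is correct, and it runs on the same engine as the paper's proof: reduce via $B_k = UD_kV$ to diagonal matrices, then invoke the separating-vector/constant-diagonal machinery of Lemma~\ref{thesisthm} together with the trace-orthogonality $\tr(B_i^*B_j)=0$ to land on the criterion of \cite{N13}. Where you genuinely diverge is in how the non-closure of $\mathfrak{S}_0$ is handled. The paper's workaround is to assume WLOG that the set of states is \emph{maximal} among orthonormal sets admitting the given simultaneous Schmidt decomposition, which forces $\mathfrak{S}_0 = V^*\Delta V$ to be the full conjugated diagonal algebra; Theorem~\ref{thm2} then applies verbatim, with the all-ones vector (conjugated by $V$) as the separating vector. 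Your workaround instead applies Lemma~\ref{thesisthm} to the generated algebra $\mathfrak{A}=\mathrm{Alg}(\mathfrak{S}_0)$, which is abelian so that $m_k=1\le n_k$ automatically, and then pushes the constant-diagonal property down to the subset $\mathfrak{S}_0$ (legitimate, since having constant diagonal is an entrywise condition inherited by any subcollection). Each route buys something: the paper's is shorter but silently relies on the claim that maximality really does make $\mathrm{span}\{B_i^*B_j, I\}_{i\ne j}$ equal to all of $V^*\Delta V$; yours avoids enlarging the state set and makes the logical dependence on Lemma~\ref{thesisthm} rather than on Theorem~\ref{thm2} explicit.

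One small correction to your closing remark: your proposed witness that $\mathfrak{S}_0$ can fail to be multiplicatively closed does not work. With $B_1=I$, $B_2=\mathrm{diag}(1,-1,1,-1)$, $B_3=\mathrm{diag}(1,1,-1,-1)$, the product $B_2^*B_3=\mathrm{diag}(1,-1,-1,1)$ is itself one of the generators $B_i^*B_j$ with $i\ne j$, so $\mathfrak{S}_0=\mathrm{span}\{I,B_2,B_3,B_2B_3\}$ is the span of a Klein four-group of commuting involutions and \emph{is} closed under multiplication; you appear to have conflated $\mathfrak{S}_0$ with $\mathrm{span}\{I,B_2,B_3\}$. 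The underlying concern is still valid --- for instance $B_1=I$, $B_2=\mathrm{diag}(1,i,-1,-i)$ on $\C^4$ gives $\mathfrak{S}_0=\mathrm{span}\{I,B_2,B_2^*\}$, which omits $B_2^2$ --- so your decision to work with $\mathrm{Alg}(\mathfrak{S}_0)$ is well motivated; only the illustrating example needs replacing.
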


\begin{proof} Without loss of generality, we may assume that $\{(I\otimes B_k)\ket{\Phi}\}_{k=1}^n$ is a maximal set of orthonormal states which have a simultaneous Schmidt decomposition.  Therefore there exists unitaries $U$ and $V$ and $n$ complex diagonal matrices $D_k$ such that for each $k$, $B_k=UD_kV$.  Then the operator system $\mathfrak{S}_0 = \mathrm{span}\,\{B_i^* B_j, I \}_{i \neq j}=V^*\Delta V$ where $\Delta$ is the algebra of diagonal matrices.  Since $V^*\Delta V$ has a separating vector, the result follows from Theorem \ref{thm2}.  \end{proof}

\subsection{States from the Stabilizer Formalism}

We will use the following notation below: if $S$ is a subset of a group $G$, then $\langle S\rangle$ is the subgroup of $G$ generated by $S$.
Let $\mathcal{P}_{n}$ be the $n$-qubit Pauli group; that is, the group with generating set as follows: $$\mathcal{P}_{n} : = \langle  \pm iI; X_{j}, Y_j, Z_{j} : 1 \leq j \leq n \rangle\,,$$
where  $X_1 = X\otimes I\otimes \cdots \otimes I = X \otimes I^{\otimes (n-1)}$, etc.

Next we state a basic result on maximal abelian subgroups of the Pauli group (e.g. see \cite{mousavi2016quantum}). We recall the {\it Clifford group} is the normalizer subgroup of the Pauli group $\mathcal P_n$ inside the group of $n$-qubit unitary operators.

\begin{lem}\label{lem12} \cite[Lemma 2.4]{mousavi2016quantum}
  Let $G$ be a subgroup of $\mathcal{P}_n$ and let $S_0$ be a minimal generating set for a maximal abelian subgroup $S$ of $G$, where $S_0=\{ g_1,..., g_m \} $. Then, we can find a unitary $U$ in the Clifford group such that $U^* g_j U = Z_j$ , $1 \le j \le m$.
 \end{lem}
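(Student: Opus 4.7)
The plan is to work through the standard binary symplectic representation of the Pauli group. Each $g \in \mathcal{P}_n$ admits a unique expression $g = \epsilon\, X^a Z^b$ with $\epsilon \in \{\pm 1, \pm i\}$ and $(a,b) \in \F_2^{2n}$, where $X^a = X_1^{a_1}\cdots X_n^{a_n}$ and similarly for $Z^b$; the assignment $\pi\colon g \mapsto (a,b)$ is a group homomorphism whose kernel is the center, and two Pauli elements $g, h$ commute exactly when $\omega(\pi(g), \pi(h)) = 0$, where $\omega((a,b),(a',b')) = a \cdot b' + b \cdot a'$ is the symplectic form on $\F_2^{2n}$. Under this dictionary the hypotheses translate cleanly: setting $v_j := \pi(g_j)$, the vectors $\{v_1,\dots,v_m\}$ are pairwise symplectically orthogonal (isotropic), and minimality of $S_0$ ensures they are linearly independent over $\F_2$ (after passing, if needed, to Hermitian representatives so that $-I \notin S$).

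The second step is to invoke the well-known surjection of the Clifford group onto $\mathrm{Sp}(2n, \F_2)$: conjugation by any Clifford unitary $U$ descends to a symplectic linear map on $\F_2^{2n}$, and every symplectic map arises this way. It therefore suffices to construct some $T \in \mathrm{Sp}(2n, \F_2)$ sending each $v_j$ to $f_j := \pi(Z_j)$, the standard basis vector for the $Z$-block. This I would do by a symplectic Gram--Schmidt / Witt extension argument over $\F_2$: a linearly independent isotropic set of size $m \le n$ can be completed to a symplectic basis $\{v_1,\dots,v_m, w_1,\dots,w_m, \dots\}$ satisfying $\omega(v_i, w_j) = \delta_{ij}$ with all other pairings zero, and any two symplectic bases are carried to each other by an element of $\mathrm{Sp}(2n, \F_2)$. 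Mapping the constructed basis to the standard symplectic basis yields the required $T$.

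Finally, a Clifford lift of $T$ gives $U^* g_j U = \beta_j Z_j$ for scalars $\beta_j$ that a priori lie in $\{\pm 1, \pm i\}$. The Hermitian normalization above forces $\beta_j \in \{\pm 1\}$, since Clifford conjugation preserves Hermiticity. Any $\beta_j = -1$ is then corrected by post-composing $U$ with $X_j$, which is itself Clifford and satisfies $X_j Z_j X_j = -Z_j$ while fixing $Z_i$ for $i \neq j$; the sign corrections decouple across coordinates and so can be performed independently. The main obstacle is this last phase bookkeeping: the symplectic picture forgets all scalar phases, so the Clifford lift is only determined up to multiplication by a Pauli, and one must verify that the residual ambiguities can be resolved simultaneously for all $j$. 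The independence of the $X_j$-corrections is precisely what makes this simultaneous resolution possible.
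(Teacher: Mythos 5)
The paper offers no proof of this lemma at all: it is imported verbatim from \cite{mousavi2016quantum}, so there is no internal argument to compare yours against. Your proof is the standard one --- binary symplectic representation $\pi\colon \mathcal{P}_n \to \F_2^{2n}$ with central kernel, commutation encoded by the symplectic form, Witt extension of the isotropic set $\{\pi(g_j)\}$ to a symplectic basis, surjectivity of Clifford conjugation onto $\mathrm{Sp}(2n,\F_2)$, and a final Pauli correction $\prod_{j:\beta_j=-1} X_j$ to kill the residual signs --- and each of these steps is correct and standard. In particular you are right that the sign corrections decouple, since $X_j$ anticommutes with $Z_j$ and commutes with $Z_i$ for $i\neq j$.

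The one point that deserves more than a parenthetical is the phase bookkeeping, though this is really a looseness in the statement rather than a gap in your argument. Linear independence of the $v_j=\pi(g_j)$ does \emph{not} follow from minimality of $S_0$ alone; it requires $S\cap\ker\pi=\{I\}$, i.e.\ $-I\notin S$ and $iI\notin S$, which can fail for a maximal abelian subgroup of an arbitrary $G\le\mathcal{P}_n$ (take $G$ abelian and containing $-I$). Worse, ``passing to Hermitian representatives'' replaces $g_j$ by $\epsilon_j g_j$, and if $\epsilon_j=\pm i$ (e.g.\ $g_j=iZ_1$) then no unitary conjugation whatsoever sends the \emph{original} $g_j$ to $Z_j$, since conjugation preserves the spectrum; so the lemma is literally false without the standard stabilizer convention that the generators are Hermitian Pauli words and the subgroup avoids $-I$. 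Under that convention --- which is what the cited source and the application to stabilizer codes in Section~\ref{QEC Formalism} implicitly assume --- your argument, including the Hermiticity step forcing $\beta_j\in\{\pm1\}$, goes through completely.
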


We can now prove the main result of this section.

\begin{thm}\label{stabform}
Let $\{ B_i \}$ be a complete set of $4^k$  encoded logical Pauli operators for a stabilizer $k$-qubit code on $n$-qubit Hilbert space. Then the set of states $\mathcal S = \{ (I\otimes B_i)\ket{\Phi} \}$ is distinguishable by one-way LOCC if and only if $k \leq \frac{n}{2}$.
\end{thm}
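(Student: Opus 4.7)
The plan is to apply Theorem \ref{thm2} by computing $\mathfrak{S}_0 = \mathrm{span}\,\{B_i^* B_j, I\}_{i \neq j}$ up to unitary equivalence and invoking Lemma \ref{thesisthm} to characterize the existence of a separating vector. First I would normalize the stabilizer by a Clifford unitary, then identify the resulting algebra $\mathfrak{S}_0$, and finally read off the separating-vector condition.

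For the normalization, I apply Lemma \ref{lem12} to the stabilizer $\mathcal{S}$ itself, viewed as a maximal abelian subgroup of itself: with $n-k$ independent generators $g_1,\ldots,g_{n-k}$, the lemma produces a Clifford unitary $U$ satisfying $U^* g_j U = Z_j$ for each $j$. After a further Clifford swap relabeling the qubits, I may assume the conjugated stabilizer is $\tilde{\mathcal{S}} = \langle Z_{k+1},\ldots,Z_n \rangle$. The normalizer of $\tilde{\mathcal{S}}$ in $\mathcal{P}_n$, modulo phase, is then $\mathcal{P}_k \otimes \langle Z_{k+1},\ldots,Z_n\rangle$, and its $4^k$ cosets over $\tilde{\mathcal{S}}$ correspond one-to-one to Paulis on the first $k$ qubits. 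Choosing the natural group-theoretic representatives $\tilde B_i = P_i \otimes I_{2^{n-k}}$ as $P_i$ runs over $\mathcal{P}_k$ -- an honest subgroup of $\mathcal{N}(\tilde{\mathcal{S}})$ isomorphic to $\mathcal{P}_k$ -- the products $\tilde B_i^* \tilde B_j$ are again Paulis of the same form up to phase, so
\bee
U \mathfrak{S}_0 U^* = \mathrm{span}\,\{P \otimes I_{2^{n-k}} : P \in \mathcal{P}_k\} = M_{2^k} \otimes I_{2^{n-k}}.
\eee
In particular $\mathfrak{S}_0$ is closed under multiplication, so the hypothesis of Theorem \ref{thm2} is met. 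Applying Lemma \ref{thesisthm} to this single-summand algebra with $(m_1,n_1) = (2^k, 2^{n-k})$ shows that a separating vector exists if and only if $2^{n-k} \geq 2^k$, equivalently $k \leq n/2$; Theorem \ref{thm2} then gives the desired equivalence for one-way LOCC distinguishability of $\mathcal{S}$.

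The main obstacle is justifying the choice of logical Pauli representatives: namely, that $\{B_i\}$ may be taken as a subgroup of $\mathcal{P}_n$ (with trivial stabilizer factor in the normalized picture), so that $\mathfrak{S}_0$ really is the $C^*$-algebra $M_{2^k}\otimes I_{2^{n-k}}$ rather than a strictly larger subset of $M_{2^k}\otimes D_{n-k}$, where $D_{n-k}$ denotes the diagonal subalgebra on the stabilized qubits. For generic coset representatives $B_i = P_i \otimes D_i$ with nontrivial stabilizer factors $D_i$, the products $B_i^* B_j$ pick up stabilizer contributions and the span need not even be multiplicatively closed. This group choice is however the canonical one in the stabilizer formalism, where $\bar X_i,\bar Z_i$ are chosen to satisfy the Pauli commutation relations and their products furnish the required $4^k$-element set.
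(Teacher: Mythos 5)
Your proposal is correct and follows essentially the same route as the paper: reduce to the canonical stabilizer $\langle Z_{k+1},\ldots,Z_n\rangle$ via Lemma \ref{lem12}, identify $\mathfrak{S}_0$ with $M_{2^k}\otimes I_{2^{n-k}}$, and conclude via Theorem \ref{thm2} and Lemma \ref{thesisthm} that a separating vector exists iff $2^k\leq 2^{n-k}$. Your closing discussion of why the logical representatives may be taken as the honest subgroup $\mathcal{P}_k\otimes I_{2^{n-k}}$ simply makes explicit what the paper's phrase ``it suffices to prove the result for the canonical case'' leaves implicit.
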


\begin{proof} By Lemma \ref{lem12}, it suffices to prove the result for the canonical case: we have the $4^k$ element set of $n$-qubit Pauli operators $\mathcal{P}_{n,k} =  \langle X_{j}, Z_{j} : 1 \leq j \leq k \rangle / \{ \pm iI\}$, which form a complete set of encoded operations for the code $\mathcal C = \mathrm{span}\,\{\ket{i_1 \cdots i_k 0\cdots 0} : i_j = 0,1\}\subseteq \mathbb C^{2^n}$, the stabilizer subspace for the subgroup generated by $\{ Z_{k+1}, \cdots , Z_{n}\}$.

Observe here that $\mathfrak{S}_0 := \mathrm{span}\,(\mathcal P_{n,k}) = \mathrm{Alg}\,(\mathcal P_{n,k}) = M_{2^k}\otimes I_{2^{n-k}}.$ Hence from Theorem~\ref{thm2}, the states $\mathcal S$ are distinguishable by one-way LOCC if and only if $\mathfrak{S}_0$ has a separating vector if and only if $2^k \leq 2^{n-k}$, in other words $2k \leq n$, as required.
\end{proof}

\begin{rmk}
{\rm Since the states in Theorem~\ref{stabform} are maximally entangled, we see that Theorem~\ref{stabform} saturates the bound that you cannot distinguish more than $d$ maximally-entangled states in $\C^d \otimes \C^d$ with one-way LOCC \cite{ghosh2004distinguishability, Nathanson-2005}. In this example, the dimension is $d = 2^n \ge 4^k$ when $k \le \frac{n}{2}$. The converse that one-way LOCC is possible whenever $k \le \frac{n}{2}$ is significant. As pointed out in \cite{cosentino2013small}, there exist subsets of $\mathcal{P}_{n}$ of size less than $2^n$ that are not distinguishable even with positive partial transpose operations (and hence not with one-way LOCC).}
\end{rmk}

\section{Conclusion \& Outlook}

There are a number of lines of investigation we see potentially coming out of this work. It would be interesting to further explore the channel perspective of LOCC in general and its quantum error correction connections specifically. We obtained some LOCC applications of this combined perspective here, and we feel there should be others obtainable.

Additionally, as noted in Remark~\ref{oaqecremark}, some of the LOCC applications presented here may be viewed as generated through connections with the framework for operator algebra quantum error correction, which also includes the approach for hybrid classical and quantum error correction. One significant part of the theory has not been applied here though, namely the case of correctable subsystems and so-called ``operator quantum error correction''  \cite{klp05,klpl06}. This aspect of the theory is well-developed and has strong connections with studies in passive quantum error correction, including decoherence-free and noiseless subsystems. It is not clear how correctable subsystems relate to LOCC analyses, but the results obtained here suggest this is a topic worth pursuing.

It should also be possible to build on the approach of the previous section, to construct other new sets of states that are distinguishable or indistinguishable under one-way LOCC based on the results of \cite{kribs2016operator}. For instance, one could consider other states that arise in quantum error correction, such as states generated by error operators defined through generalized representations of the Pauli relations.

\strut

{\noindent}{\it Acknowledgements.} D.W.K. was partly supported by NSERC. C.M. was partly supported by the African Institute for Mathematical Sciences and Mitacs. R.P. was partly supported by NSERC. M.N. acknowledges the ongoing support of the Saint Mary's College Office of Faculty Research, and would like to thank Andrew Conner for helpful conversations.

\bibliographystyle{plain}

\bibliography{MNBibfile}

\end{document}